\documentclass[acmsmall,dvipsnames,svgnames,nonacm]{acmart}

\usepackage{multicol}
\usepackage{mathpartir}
\usepackage{tikz}
\usepackage{tikz-cd}
\usetikzlibrary{positioning,arrows.meta,fit,backgrounds,calc}
\usepackage[table]{xcolor}
\usepackage{pagecolor}
\usepackage{amsmath}
\usepackage{graphicx}
\usepackage{subcaption}
\usepackage{bookmark}
\usepackage[T1]{fontenc}
\usepackage[utf8]{inputenc}
\usepackage[scaled=0.82]{beramono}
\usepackage[yyyymmdd]{datetime}
\usepackage{minted}
\usepackage{fancyvrb}
\usepackage{thmtools}
\usepackage{nameref,cleveref}
\usepackage{listings}
\usepackage[inline]{enumitem}
\usepackage{everypage}
\definecolor{dkgreen}{rgb}{0.5,0.7,0.4}
\definecolor{ltblue}{rgb}{0,0.4,0.4}
\definecolor{dkblue}{rgb}{0,0.8,0.8}
\crefname{section}{\S}{\S\S}
\definecolor{dkviolet}{rgb}{0.3,0,0.5}
\definecolor{dkred}{rgb}{1,0,0}
\usepackage[yyyymmdd]{datetime}

\usepackage{hyperref}
\lstdefinelanguage{Coq}{
    mathescape=true,
    texcl=false,
    morekeywords=[1]{Section, Module, End, Require, Import, Export,
        Variable, Variables, Parameter, Parameters, Axiom, Hypothesis,
        Hypotheses, Notation, Local, Tactic, Reserved, Scope, Open, Close,
        Bind, Delimit, Definition, Let, Ltac, Fixpoint, CoFixpoint, 
        Morphism, Relation, Implicit, Arguments, Unset, Contextual,
        Strict, Prenex, Implicits, Inductive, CoInductive, Record,
        Structure, Canonical, Coercion, Context, Class, Global, Instance,
        Program, Infix, Theorem, Lemma, Corollary, Proposition, Fact,
        Remark, Example, Proof, Goal, Save, Qed, Defined, Hint, Resolve,
        Rewrite, View, Search, Show, Print, Printing, All, Eval, Check,
        Projections, inside, outside, Def},
    morekeywords=[2]{forall, exists, exists2, fun, fix, cofix, struct,
        match, with, end, as, in, return, let, if, is, then, else, for, of,
        nosimpl, when},
    morekeywords=[3]{Type, Prop, Set, true, false, option},
    morekeywords=[4]{pose, set, move, case, elim, apply, clear, hnf,
        intro, intros, generalize, rename, pattern, after, destruct,
        induction, using, refine, inversion, injection, rewrite, congr,
        unlock, compute, ring, field, fourier, replace, fold, unfold,
        change, cutrewrite, simpl, have, suff, wlog, suffices, without,
        loss, nat_norm, assert, cut, trivial, revert, bool_congr, nat_congr,
        symmetry, transitivity, auto, split, left, right, autorewrite},
    morecomment=[s]{(*}{*)},
    showstringspaces=false,
    morestring=[b]",
    morestring=[d]’,
    extendedchars=false,
    sensitive=true,
    breaklines=false,
    basicstyle=\footnotesize\ttfamily,
    captionpos=b,
    columns=[l]flexible,
    identifierstyle={\ttfamily\color{black}},
    keywordstyle=[1]{\ttfamily\bfseries\color{dkviolet}},
    keywordstyle=[2]{\ttfamily\bfseries\color{dkgreen}},
    keywordstyle=[3]{\ttfamily\color{ltblue}},
    keywordstyle=[4]{\ttfamily\color{dkblue}},
    stringstyle=\ttfamily,
    commentstyle={\ttfamily\color{dkgreen}},
    literate=
    {\\forall}{{\color{dkgreen}{$\forall\;$}}}1
    {\\exists}{{$\exists\;$}}1
    {<-}{{$\leftarrow\;$}}1
    {=>}{{$\Rightarrow\;$}}1
    {==}{{\code{==}\;}}1
    {==>}{{\code{==>}\;}}1
    {->}{{$\rightarrow\;$}}1
    {<->}{{$\leftrightarrow\;$}}1
    {<==}{{$\leq\;$}}1
    {\\o}{{$\circ\;$}}1
    {\@}{{$\cdot$}}1
    {\/\\}{{$\wedge\;$}}1
    {\\\/}{{$\vee\;$}}1
    {++}{{\code{++}}}1
    {\@\@}{{$@$}}1
    {\\mapsto}{{$\mapsto\;$}}1
    {\\hline}{{\rule{\linewidth}{0.5pt}}}1
}[keywords,comments,strings]
\newcommand{\versionInfo}{{Version \today:\currenttime}}

\lstdefinestyle{CoqResourceIsolation}{
    language=Coq,
    morekeywords=[5]{step_running, machine_st, local_init_state, observe_output, job_done, next_job_rdy, output_job_resp, output_job_accept}, %
    keywordstyle=[5]{\ttfamily\color{dkred}}, %
    basicstyle=\scriptsize\ttfamily, %
    numbers=left,                %
    numberstyle=\scriptsize%
}

\lstdefinestyle{CoqEnclaveIsolation}{
    language=Coq,
    morekeywords=[5]{extract_dram, extract_rf, step_running, machine_st, enclave_params, can_start, wf_can_start, init_machine, init, spin_up_machine, merge_external_observations},
    keywordstyle=[5]{\ttfamily\color{dkred}}, %
    basicstyle=\scriptsize\ttfamily, %
    numbers=left,                %
    numberstyle=\scriptsize %
}

\newtheorem{theorem}{Theorem}

\crefname{definition}{definition}{definitions}
\Crefname{definition}{Definition}{Definitions}

\newif\ifincludenotes
\includenotesfalse

\newcommand{\note}[3]{}
\ifincludenotes
\renewcommand{\note}[3]{\textcolor{#1}{\textbf{[#2: #3]}}}
\fi

\newcommand{\Koika}{K\^oika}
\newcommand{\quartz}{Quartz}

\usepackage{soul}
\theoremstyle{definition}
\newtheorem{defn}{Definition}
\newcommand{\granite}{Granite}

\newcommand{\PageColorHook}{%
    \ifnum\value{page}>24
        \pagecolor{red!20} %
    \fi
}
\newcommand{\rf}{\mathit{rf}}
\newcommand{\pc}{\mathit{pc}}
\newcommand{\imem}{\mathit{imem}}
\newcommand{\dmem}{\mathit{dmem}}

\ifincludenotes
\AddEverypageHook{\PageColorHook}
\fi

\newcolumntype{C}{>{\centering\arraybackslash}p{1.4mm}}
\newlength{\bandside}

\date{}

\title{\granite{}: A Modular Methodology for Foundational Verification of Hardware-Software Leakage Contracts
}
\author{Stella Lau}
\affiliation{%
\institution{Massachusetts Institute of Technology}
  \country{USA}}

\author{Andres Erbsen}
\affiliation{%
  \institution{Google and University of Washington}
  \country{USA}}

\author{Adam Chlipala}
\affiliation{%
\institution{Massachusetts Institute of Technology}
  \country{USA}}
\renewcommand{\shortauthors}{}
\authorsaddresses{}
\begin{document}

\begin{abstract}
\granite{} is a methodology for modular verification of both functional correctness and nonleakage of RTL processors against ISA contracts.
We prove that the cycle-by-cycle timing of a pipelined RISC design---with speculation, precise interrupts, and I/O---is determined solely by observables specified in an ISA leakage contract.
For programs that keep observables independent of secrets (i.e., following the cryptographic-constant-time discipline), this result conclusively rules out information leakage through known and unknown timing side channels.

\granite{}'s specifications only constrain functional correctness and information-flow dependencies: not how many cycles an instruction takes, at which instruction an interrupt is handled, or the exact latencies of submodules such as multipliers and memory.
\granite{}'s central technique is \emph{leakage-aware refinement via determinism}, which establishes correctness and confidentiality together as trace equivalence with respect to a family of cycle-level, deterministic spec machines. 
Secret-independent nondeterminism is handled by \emph{existentially parameterizing} specifications with untrusted, deterministic functions acting only on public data.
Submodules are proved against their own leakage-aware specifications, and these proofs compose into the whole-design guarantee---which therefore holds over a space of secure implementations.

We believe this work is the first to achieve modular and foundational connection between instruction-set-level leakage contracts and microarchitecture-specific cycle-by-cycle execution with wire-level observations.
Our proofs compose with a certified static analysis that recognizes cryptographic-constant-time code to derive a single Rocq theorem about the cycle-by-cycle confidentiality of a hardware-and-software cryptographic implementation---eliminating every intermediate specification, including the ISA contract itself, from the trusted computing base.

\end{abstract}

\maketitle

\section{Introduction}
Microarchitectural timing attacks have revealed a systemic security lapse within computer architecture.
For years, cryptographers relied on constant-time programming to ensure that the wall-clock time when executing cryptographic software on hardware is independent of secrets.
However, starting with Spectre~\cite{spectre-oakland19} and Meltdown~\cite{meltdown-usenix18}, there has been a surge of microarchitectural timing vulnerabilities invalidating traditional constant-time assumptions and exposing novel leakage vectors.
Despite years of mitigations, new attack vectors continue to be discovered.

Formal verification against leakage-aware architectural specifications offers a compelling approach, adopted by this paper, for ruling out timing side channels.
A leakage-aware hardware-software contract specifies both the functional instruction set architecture (ISA) semantics and a declassification policy that characterizes exactly which architectural events may influence observable timing.
For example, the constant-time policy states that only addresses of memory accesses, branch conditions, and arguments to variable-latency instructions are leaked.
Proving that an RTL processor implementation satisfies such a contract yields a mathematical guarantee that hardware does not introduce unexpected timing channels beyond what is explicitly permitted by the contract.

For such a formal-verification approach to be trustworthy as the primary assurance mechanism and applicable to practical designs, the methodology should satisfy four critical requirements:
1) it is auditable, with a specification simple enough to inspect and with the RTL implementation removed from the trusted computing base (TCB);
2) it composes with software proofs for end-to-end, hardware-software guarantees that eliminate the hardware-software contract itself from the TCB;
3) it is expressive enough for real hardware, supporting the nondeterminism arising from asynchronous interrupts, inputs, and unspecified behaviour; and 
4) it is modular, both to tame the state-explosion problem encountered in hardware verification and to let designers reason about early-stage designs.
No prior methodology for verifying processors against hardware-software leakage contracts achieved all four.
\granite{} addresses this gap.

\paragraph{Existing approaches}
Various works propose sophisticated leakage models~\cite{guarnieri2021hardware,  ct-foundations, axiomatic-hw-sw-isca-22} and defenses~\cite{ghostMinion, SPT, prospect} but are either unverified or do not provide formal guarantees for RTL implementations.
A line of work~\cite{Leave,contractShadowLogic,upec-dit} uses model checkers to verify noninterference of RTL designs against ISA leakage models.
However, these methods assume functional correctness (and therefore require auditing the implementation), use a limited specification language of model checkers that does not smoothly compose with software proofs for end-to-end hardware-software guarantees, and lack the modular decomposition needed to tackle the state-explosion problem and scale to larger designs.
Another line of work~\cite{Fjfj,kami,lightbulb} uses proof assistants to prove functional correctness of processors as refinement against one-instruction-at-a-time (OIAAT) specifications, but refinement does not generally suffice for confidentiality as nondeterminism leaks secrets~\cite{interactive-information-flow-chong}.
Lastly, Notary~\cite{Notary} and Parfait~\cite{parfait} collapse the hardware-software stack to prove functional correctness and confidentiality for specific programs on specific processors, but the approaches used do not yield a blueprint for modular engineering and verification of their software and hardware components.

\subsection{Our Approach}
\granite{} is a methodology, formalized in Rocq, for end-to-end verification of both functional correctness and nonleakage of cycle-accurate RTL processors against ISA-level specifications. 
We show how to build modular proofs for processors and their components, with specifications that capture which information flows are allowed without prescribing specific implementation behaviours or requiring determinism.
Compared to existing hardware-verification techniques, the specification style we recommend is closer to that used in functional-correctness proofs~\cite{Fjfj,kami,lightbulb} than established noninterference-verification work~\cite{Leave,contractShadowLogic,upec-dit}. We present three aspects:

\paragraph{A specification approach capturing both functional correctness and nonleakage, supporting nondeterminism.}
\granite{} uses a specification approach, \emph{leakage-aware refinement via determinism}, that reduces both functional correctness and nonleakage to trace equivalence with respect to a family of deterministic, cycle-accurate state machines.
The central challenge is supporting nondeterminism without underspecifying (allowing unintended leakage) or overspecifying (ruling out secure implementations).
Our strategy is to \emph{existentially parameterize} each specification with a deterministic, public \emph{driver}---depending only on public information (public inputs, the program, and declassified leakage), responsible for resolving secret-independent nondeterminism such as the timing of outputting an MMIO request or at which instruction boundary to take a precise interrupt---and a deterministic \emph{witness} for nondeterminism that may legitimately depend on secrets. These parameters are untrusted.

We formalize nonleakage as the existence of a global \emph{leakage transformer} mapping the trace of public inputs to adversary observations (e.g.\ timing), with the (static) choice of that function left intentionally unconstrained to allow for implementation variability.
This property trivially implies the classical notion of noninterference.
A simple example appears in \Cref{fig:opener}.
This idea takes a variety of shapes when reconciling differences between specification and implementation; we show how to capture that:
\begin{itemize}
    \item A zero-skip multiplier's (public) latency can depend on whether either operand is zero but not on other aspects of input data.
    \item The response time of a memory subsystem may depend on addresses but not on values stored and loaded.
    \item While the instruction-set specification processes one instruction per step, a processor cycle may or may not commit a new instruction---but it must not decide based on secrets.
    \item Timing of external inputs and interrupts can influence processor progress, but values of inputs cannot in general be revealed.
    \item A processor can delay taking an interrupt (note that this decision can alter the sequence of instructions executed and thus the ISA-level observations!), but the decision of when to take the interrupt should not depend on secrets.
\end{itemize}

\begin{figure}
\small
\begin{verbatim}
Record req_t := { input_a : bv width; input_b: bv width }.
Record MulSt := { reqs: list req_t; hist : list ActionMethod }.
Definition leakage_of_AM (m: ActionMethod) : LeakEvent :=
  match m with
  | Enq req => LeakEnq ((req.(input_a) = 0) || (req.(input_b) = 0))
  | Deq => LeakDeq
  | Tick => LeakTick
  end.
Definition RespReadyOk impl spec := exists leakageTransformer,
  impl.(RespReady) = match spec.(reqs) with 
                     | [] => false 
                     | _ => leakageTransformer (map leakage_of_AM spec.(hist))
                     end.
\end{verbatim}
\caption{Example fragment of a module specification: the public timing of a zero-skip multiplier is a fixed function of whether its operands are zero. Functional correctness requires a pending request for \texttt{RespReady} to be high.
}
\label{fig:opener}
\end{figure}

We show how to formalize an ISA leakage contract (covering I/O, exceptions, and interrupts) in this style by lowering the ISA specification---written in the style of \textsc{RISCV-COQ}~\cite{riscv-coq,bedrock-ct} and extended with a leakage model emitting declassified events (e.g.\ memory addresses, branch conditions)---into a cycle-level machine whose nondeterminism is constrained by existential parameters. 
Functional correctness becomes I/O-trace equivalence over all programs, data, and inputs; nonleakage becomes the existence of a leakage transformer from public inputs, public data, and the specification leakage trace to public outputs.

\paragraph{A verification approach structured around horizontal and vertical modularity.}
\emph{Vertically} (c.f.~\Cref{fig:vertical}), a layered approach decouples ISA-specific details from microarchitectural design and parameterizes the proof over different ISA encodings and existential parameters;
we show that connecting the layer to RTL amounts to instantiating the existential parameters by running a public ``shadow'' copy of the implementation.
\emph{Horizontally}, we construct per-component specifications---with canonical representations capturing both correctness and leakage, plus sub-cycle invariants---extending prior one-method-at-a-time modular refinement~\cite{kami,Fjfj,bluespec-memocode04} to cycle-accurate nonleakage, yielding theorems general enough to cover classes of designs and to verify early-stage designs and defenses.

\paragraph{An end-to-end, machine-checked guarantee from software to RTL}
We build a confidentiality-and-correctness proof for a pipelined RISC processor (as synthesizable RTL) featuring branch prediction, exceptions, interrupts, and memory-mapped input and output.
We then demonstrate suitability of the instruction-set specification for integration verification by implementing and proving a static analysis for constant-time programming: any program accepted by the static analysis, run on a processor satisfying the specification, does not leak secrets at the RTL.
Instantiated on a Salsa20 program compiled with a standard RISC-V toolchain, this proof architecture yields a single, foundational proof connecting the source-level constant-time discipline to RTL and eliminates intermediate specification layers (including the hardware-software contract) from the trusted computing base.

\paragraph{Contributions.}
This paper contributes:
\begin{itemize}
    \item A specification approach for functional correctness and nonleakage under nondeterminism, via trace equivalence with deterministic, cycle-accurate state machines.
    \item A formal HW/SW leakage contract covering I/O, exceptions, and interrupts.
    \item A methodology for modular proofs of functional correctness and nonleakage of RTL designs.
    \item A machine-checked proof of a synthesizable, pipelined RISC processor (speculation, interrupts, I/O) against a constant-time ISA contract.
    \item An end-to-end proof combining the above with verified-constant-time software into an RTL confidentiality theorem, stated without reference to the novel specs.
\end{itemize}
Source code for \granite{} will be open-source and is included as an anonymized supplement.

\paragraph{Limitations and nongoals.}
The proofs we demonstrate rely on determinism of the hardware-description-language fragment we use, and thus it is not immediately clear how to extend our approach to reason about combined behaviour of modules in different clock domains.
Less fundamentally, we use a minimal RISC instruction set (assorted instructions from RISC-V), consider only single-hardware-thread executions with read-only instruction memory (we expect challenges with extensions on this front to be predominantly related to functional correctness rather than timing leakage), and do not actually connect our proofs to the compiler-verification work~\cite{bedrock-ct} that inspired aspects of our ISA specification.
The only side channel considered in this work is cycle-by-cycle timing; observations of within-cycle timing or power consumption are out-of-scope, as are attacks involving physical access.

\section{Background and Motivation}
\subsection{Hardware-Software Leakage Contracts}
The ISA is the contract between hardware and software.
An ISA such as RISC-V or x86 defines an architectural model, consisting of registers and memory, and how each instruction updates architectural state, granting hardware designers freedom to optimize (e.g.~via pipelining, branch prediction, and caching) so long as architectural state matches a sequential execution of instructions\footnote{Relaxed with weak memory, outside the scope of this paper.}. Traditionally, this contract has been purely functional, omitting microarchitectural state and abstracting over latency.

As the ISA says nothing about timing, optimizations can leak secrets through timing side channels.
A \emph{leakage contract} closes this gap by specifying which architectural events may influence timing, acting as a declassification policy.
Under the cryptographic constant-time policy, only control flow, addresses of memory accesses, and inputs to variable-latency instructions are leaked;
it is then the software programmer's responsibility to keep secrets out of these channels.
Numerous libraries hold all cryptographic code to this contract, either informally (BoringSSL, BearSSL) or formally (HACL$^*$~\cite{HACL} and EverCrypt~\cite{EverCrypt}).

Guarantees provided by these contracts are only as sound as the leakage model, and models routinely make assumptions that some real processors do not satisfy.
For example, HACL$^*$ assumes integer multiplication is constant-time, but integer multiplication can be variable-time on some ARM and i386 platforms; OpenSSL shipped cryptographic code deemed secure under a leakage model that did not account for time-variable arithmetic operations~\cite{ammanaghatta2022enforcing}; and hardware verification exposes still more (c.f.\ \Cref{sec:case_study}).

\paragraph{Transient-execution attacks}
Worse, transient-execution attacks such as Spectre~\cite{spectre-oakland19} and Meltdown~\cite{meltdown-usenix18} revealed that modern processors violate even the constant-time contract: speculatively executed instructions leave secret-dependent side effects in caches, TLBs, and branch predictors the policy ignores. 
Proposed defenses~\cite{delay-on-miss, invisispec, safespec, muontrap, ghostMinion} aim to restore the guarantee discussed next, but none has a verified RTL implementation.

\subsection{Contractual Noninterference}
\granite{}'s formalization of nonleakage is based on \emph{speculative noninterference}~\cite{STT, SPT, pensieve, contractShadowLogic, ct-foundations}, which (informally) captures the notion that if a program does not leak information under the ISA model (e.g.\ it satisfies the constant-time contract), then running on hardware does not leak information.
This property is the one we verify; because leakage can arise without speculation and contracts generalize beyond it, we adopt the more general term contractual noninterference.
We first state the property in the deterministic setting (consistent with definitions from prior work on speculative noninterference), where the architectural specification and implementation can both be modeled as deterministic state machines; \Cref{sec:background:nondeterminism} discusses challenges with generalizing to nondeterminism; and \Cref{sec:overview:nondeterminism} presents our approach to generalize to nondeterministic specs. 

\begin{defn}[Contractual noninterference without nondeterminism]\label{defi:noninterference}
Let $\mathit{Pub}$ represent public initial state (such as program instructions and public data), and $\mathit{Sec^1}$ and $\mathit{Sec^2}$ represent secret data. 
Let $O_{\mathit{ISA}}^n(\mathit{Pub}, \mathit{Sec})$ denote the specification leakage trace generated by running the ISA machine for $n$ steps and $O_{\mu}^m(\mathit{Pub}, \mathit{Sec})$ denote the trace of cycle-accurate observations generated by running the RTL implementation for $m$ clock cycles. 
The implementation satisfies contractual noninterference without nondeterminism if $\forall \mathit{Pub}$:%
\begin{align*}%
& \left(\forall \mathit{Sec}^1, \mathit{Sec}^2, n.\,O^n_{\mathit{ISA}}(\mathit{Pub}, \mathit{Sec}^1) = O^n_{\mathit{ISA}}(\mathit{Pub}, \mathit{Sec}^2)\right)  \rightarrow  \\
& \left(\forall \mathit{Sec}^1, \mathit{Sec}^2, m.\,O^m_{\mu}(\mathit{Pub}, \mathit{Sec}^1) = O^m_{\mu}(\mathit{Pub}, \mathit{Sec}^2)\right)  
\end{align*}
\end{defn}

\paragraph{Leakage transformers.}
The above definition is a 4-copy property, stating that if the ISA leakage trace is independent of secrets (or therefore, the ISA leakage trace is public information), then the microarchitectural leakage trace is also independent of secrets (and therefore, there exists a function such that the microarchitectural trace is a function of public information).
The latter function is exactly the ``leakage transformer'' we use to express nonleakage:
\begin{align*}%
 & \left( \exists g.\,\forall n, \mathit{Sec}.\, O^n_{\mathit{ISA}}(\mathit{Pub}, \mathit{Sec}) = g(n, \mathit{Pub}) \right) \rightarrow \\
& \left( \exists f.\,\forall m, \mathit{Sec}.\,O^m_{\mu}(\mathit{Pub}, \mathit{Sec}) = f(m, \mathit{Pub}) \right).
\end{align*}

\paragraph{Challenges with proving contractual noninterference.}
Exploiting the assumption that the ISA leakage trace is public requires relating microarchitectural state to instruction-set-level state (which instructions have retired and \emph{will} retire), a challenging task akin to proving functional correctness.
An important simplification in model-checking-based work~\cite{Leave,contractShadowLogic,upec-dit} assumes functional correctness of the processor, using the retire stage of the processor in place of the state of the ISA execution.
While appealing due to avoiding the need to model and reason about the instruction-set specification, this approach runs into state-space-explosion challenges (due to needing to show that an instruction will retire).
Furthermore, we are not aware of any work that establishes an integrated correctness-and-confidentiality theorem based on separate functional verification and constant-time-assuming-functional verification of a processor.
Doing so seems challenging given leading approaches for comprehensive functional verification allow a rather flexible ``flushing relation''~\cite{su1996automatic,Fjfj} between the retire-stage state and instruction-set-level state instead of establishing strict equality.
As a goal of this work is a combined hardware-software result in which the hardware-software contract is only an intermediate specification, we seek a different approach.

\subsection{Refinement and Nondeterminism}\label{sec:background:nondeterminism}
Our approach is to extend---to the nonleakage setting---an approach to processor-functional-correctness verification~\cite{kami, Fjfj} based on using proof assistants to establish correctness via refinement: i.e., every behaviour of the implementation is a possible behaviour of the specification.
Refinement extends cleanly to nonleakage for deterministic specifications (as trace equivalence proved via bisimulation), but architectural specifications support various types of nondeterminism to allow for interaction with the external world and to allow hardware designers freedom to optimize. 
Sources of nondeterminism, covered by this paper\footnote{We elide, for example, nondeterminism from weak memory but discuss how the framework could be extended in~\Cref{sec:eval:mem}.}, include:
\begin{itemize}
    \item Execution time: the number of cycles it takes to execute instructions;
    \item Input nondeterminism from the external world, such as from MMIO and interrupts;
    \item Secret-independent nondeterminism, such as at which instruction boundary an asynchronous interrupt is handled; and
    \item Unconstrained output nondeterminism, such as the value on a \texttt{data} signal when the corresponding \texttt{valid} bit is low in a ready/valid handshaking interface.\footnote{Here, we assume it is the consumer's responsibility to use the \texttt{data} signal only when the corresponding \texttt{valid} bit is high.} 
\end{itemize}

\subsubsection{Specification challenges arising from inputs and nondeterminism}\label{sec:nondeterminismChallenges}
Nondeterminism gives rise to challenges with straightforward application of refinement techniques to \Cref{defi:noninterference}:

\paragraph{Challenge 1: Nondeterminism leaks secrets: the need for secret-independent nondeterminism}
Classical refinement becomes unsound for nonleakage: when a specification nondeterministically permits several behaviours, an implementation may resolve the choice on secrets---e.g., taking an interrupt immediately if and only if a secret bit is set---and the leak is then ``explained away'' by an adversarial choice of specification nondeterminism.
To preserve microarchitectural flexibility while preventing side channels, specification frameworks must support secret-independent nondeterminism to ensure that while an implementation retains the freedom to resolve design choices (such as at which instruction boundary to take an interrupt), the choice is independent of secrets.

\paragraph{Challenge 2: Nondeterminism from inputs and interrupts affects the specification leakage trace.}
If execution time were the sole source of nondeterminism and the software leakage trace were independent of execution time (as is standard in idealized constant-time models~\cite{bedrock-ct}), then \Cref{defi:noninterference} would be sufficient, as the specification leakage trace would be invariant under implementation design choices\footnote{This definition is used in prior work~\cite{bedrock-ct}, which does not address input nondeterminism.}. 
Real-world hardware, however, introduces dynamic sources of nondeterminism that directly influence the specification's leakage trace.
For example, the values returned by MMIO requests can be affected by previous interactions with the external world, which has downstream effects on leakage.
Additionally, asynchronous interrupts are scheduled nondeterministically:
while an ISA may mandate that precise interrupts occur at instruction boundaries~\cite{riscv-volume2}, it often permits flexibility regarding which specific instruction boundary traps the event.
This hardware-driven scheduling decision actively diverges program control flow to an interrupt handler, altering the resulting sequence of specification leakage events.

\paragraph{Challenge 3: Inputs and different step sizes}
The mismatch in step sizes between instruction-level software specifications and cycle-level hardware implementations not only is a key cause of the underspecification leading to microarchitectural timing side channels, but it also complicates interactions with the external world\footnote{This step-size mismatch is also a key challenge in verifying functional correctness, discussed in \Cref{sec:overview_proof_strategy}.}.
The external world can be modeled as a function of the cycle-level outputs of the microarchitectural machine, but high-level ISA specifications typically lack structural notions of clock cycles.
Thus, \granite{} lowers the ISA machine into a cycle-level state-transition system so both layers interact with a synchronized model of the external world, enabling a reduction of functional correctness and nonleakage to trace equivalence provable by standard single-cycle induction.

\subsection{Specifying Information Flow in Combinational Methods}
We augment the specifications used for modular functional-correctness verification of the processor's submodules with timing specifications.
The intuition is that any functional-correctness specifications suitable for modular verification of the processor must already capture some relation between the inputs and outputs of each module that is sufficiently precise to eventually relate these values to instruction-set-level state.
Augmenting these specifications with constraints on what the timing of the output may depend on \emph{in terms of the module's inputs} allows confidentiality verification to benefit from the modular structure of functional-correctness proofs.
While functional-correctness proofs are nontrivial, the methodology for crafting them is established and reasonably consistent across different hardware-description languages~\cite{kami,Fjfj,VerilogSemanticsJoonwon}.

As noted (\Cref{sec:nondeterminismChallenges}), the nondeterminism and refinement that functional-correctness frameworks embrace do not suffice for confidentiality.
Our approach, therefore, is to show equivalence (degenerately, refinement of a deterministic specification) after resolving all nondeterministic choices preemptively by instantiating the existential parameters.

\subsubsection{\quartz{}: A deterministic HDL enabling one-method-at-a-time semantics.}\label{sec:quartz:intro}
This strategy in turn leads us to choose a deterministic language, \quartz{}, for expressing our hardware designs.
We work with the intersection of Rocq's native language Gallina and SystemVerilog functions operating on pairs, sums, records, and arithmetic modulo powers of 2.
All concurrency is implicit---independent subexpressions in a functional program can be evaluated in parallel, and hardware-synthesis tools naturally produce implementations that do so.
Strictly speaking, all operations syntactically representable in our fragment are combinational. 
As RTL model-checking tools can trivially prove equivalence of encodings of the same circuit as different RTL programs, and our fragment can express any circuit, this encoding choice is not a capability limitation.

Following the tradition of Bluespec and hardware-verification frameworks it inspired~\cite{kami,koika-pldi20,Fjfj}, we specify modules in terms of atomic \emph{methods} that deterministically produce updated state and output given a starting state and input, perhaps calling other methods in the process.
Each clock cycle involves several methods firing, in our case through deterministic, purely functional calls from the top-level \texttt{tick} method.
A module implementation can then be verified independently as trace equivalence against a deterministic spec, existentially instantiated.

\section{Overview}
\granite{} establishes an end-to-end guarantee---a constant-time program running on synthesizable RTL leaks nothing through timing---by composing a chain of proofs, shown in \Cref{fig:vertical}.
This section highlights aspects of the proof chain and introduces an example, a zero-skip multiplier, that exposes the specification challenge of the paper: a specification must admit a whole family of functionally correct designs with legitimately varying timing yet forbid any timing that depends on secrets.

\begin{figure}
\centering
\begin{minipage}[t]{0.6\textwidth}
\centering
\begin{tikzpicture}[baseline=(current bounding box.north),
    block/.style = {
        rectangle, 
        thick, 
        fill=white,
        align=center,
        font=\sffamily\small %
    },
    arrow/.style = {
        {Stealth[scale=0.7]}-{Stealth[scale=0.7]}, %
        thick,
    },
    labelstyle/.style = {
        font=\scriptsize\sffamily,
        fill=white, 
        inner sep=2pt
    }
]
    \node[block] (A) {Constant-time program};
    \node[block] (B) [below=0.5cm of A] {Static analysis};
    \node[block] (C) [below=0.5cm of B] {HW/SW contract (SW-style)$\star$};
    \node[block] (D) [below=0.5cm of C] {HW/SW contract (HW-style)};
    \node[block] (E) [below=0.5cm of D] {Abstract HW in Rocq $\star$};
    \node[block] (F) [below=0.5cm of E] {Concrete HW in Rocq};
    \node[block] (G) [below=0.5cm of F] {Concrete HW in Quartz};
    \node[block, text=gray] (H) [below=0.5cm of G] {Synthesizable SystemVerilog};

    \draw [arrow] (A) -- (B) node[midway, anchor=west, right=5pt, labelstyle] {StaticAnalysis (prog) $\Downarrow$ Safe};
    \draw [arrow] (B) -- (C) node[midway, anchor=west, right=5pt, labelstyle, align=center] {$\forall \mathit{prog}$. \; StaticAnalysisSound $\rightarrow$ ConstantTime \\
                    ConstantTime $\rightarrow$ noninterference};
    \draw [arrow] (C) -- (D) node[midway, anchor=west, right=5pt, labelstyle] {Equivalence};
    \draw [arrow] (D) -- (E) node[midway, anchor=west, right=5pt, labelstyle] {Leakage-aware refinement $\star$};
    \draw [arrow] (E) -- (F) node[midway, anchor=west, right=5pt, labelstyle] {Submodule refinement $\star$, substitution};
    \draw [arrow] (F) -- (G) node[midway, anchor=west, right=5pt, labelstyle] {Equivalence};
    \draw [arrow, color=gray] (G) -- (H) node[midway, anchor=west, right=5pt, labelstyle, text=gray] {RTL pretty-printing (in TCB)};
    \draw [arrow, rounded corners, dashed, blue!50] (F.west) -- ++(-.75,0) |- (C.west) 
        node[pos=0.25, left, font=\scriptsize\sffamily] {Focus};
\end{tikzpicture}
\end{minipage}\hfill
\begin{minipage}[t]{0.39\textwidth}%
\caption[Vertical modularity]{Layered, end-to-end proof structure.
A program is deemed constant-time by a static analysis proven sound against a software-style ISA contract.
Observations are the timing of externally observable behaviours (e.g.~MMIO output).
An equivalence proof lowers the software-style ISA contract to a hardware-style ISA contract that uses hardware-optimized, combinational decode and execute logic shared across instructions.
A processor implementation with abstract submodules, specified using leakage contracts, is proven to refine the hardware-style ISA contract. 
After proving concrete submodules individually satisfy their specifications, the resulting processor is proven equivalent to an implementation in \quartz{} and pretty-printed to SystemVerilog.
}\label{fig:vertical} 
\end{minipage}
\end{figure}

\subsection{Example: Zero-Skip Multiplier}\label{sec:overview:example}
Before the full ISA contract, we introduce the specification style on two toy examples.
We provide intuition for how a family of deterministic specifications can admit a space of functionally correct designs with secure timing variations.

\paragraph{Warm-up: a combinational circuit.}
The smallest interesting case has no state.
A combinational circuit $\sigma$ maps a public input and a secret input to a public output and a secret output, $(out_{\mathit{pub}}, out_{\mathit{sec}}) = \sigma(in_{\mathit{pub}}, in_{\mathit{sec}})$.
Confidentiality requires only that the public output not depend on the secret input: that there exist a function $f$ with $out_{\mathit{pub}} = f(in_{\mathit{pub}})$, while $out_{\mathit{sec}}$ is free to be any function of everything.
We write this requirement as
$$ \exists f, g.\;\forall in_{\mathit{pub}}, in_{\mathit{sec}}.\;
   \sigma(in_{\mathit{pub}}, in_{\mathit{sec}}) = \bigl(f(in_{\mathit{pub}}),\, g(in_{\mathit{pub}}, in_{\mathit{sec}})\bigr). $$
The exact value of $out_{\mathit{pub}}$ is left unspecified, but the existential $f$ forces it to be secret-\emph{independent}: this device is our basic one for capturing secret-independent nondeterminism without overconstraining $\sigma$.
The unconstrained $g$ captures behaviour that is permitted to depend on secrets.

\paragraph{Adding state and time: the zero-skip multiplier.}
The combinational case fixed \emph{which} outputs may depend on secrets.
State and clock cycles raise a second question: \emph{when} does an output appear, and what may that timing depend on?
We answer both with the same strategy as before but applied over traces rather than single values: every public output, now including the cycle when it becomes available, must be a fixed function of the module's public inputs trace.

Consider a zero-skip multiplier that computes $a * b$, returning in one cycle when either operand is zero and in $n$ cycles otherwise.
We intend to place it inside a processor whose multiply instruction declassifies only \emph{whether} an operand is zero; the multiplier's contract must therefore permit the zero-skip optimization while forbidding any other data-dependent timing.
We model implementation and specification as Mealy machines\footnote{A Mealy machine is a \emph{deterministic} finite-state machine whose output values are determined both by its current state and the current input.} over the method calls \texttt{Enq}, \texttt{Deq}, \texttt{Tick}, \texttt{Full}, \texttt{RespReady}, and \texttt{Peek}, with correctness defined as output-trace equivalence, summarized in \Cref{fig:mult:method}.

A specification should admit a whole family of secure implementations, so it leaves several behaviours nondeterministic---of the two kinds from the warm-up.
A secret-independent \emph{driver} (here \texttt{d.RespReady}/\texttt{d.Full}, a function of only the public leakage trace) fixes the timing, seeing the zero bit but nothing else, playing the role of $f$.
An unconstrained \emph{witness} (here the \texttt{Peek}-while-invalid value, which may reflect secrets) plays the role of $g$.
These parameters are untrusted: an implementation is functionally correct and secure exactly when \emph{there exist} driver and witness instantiations under which its trace equals the specification's.
\Cref{sec:multiplier} gives the full contract in Rocq and proof method---we show that the witness is trivially instantiated with the implementation, and the driver is instantiated with a ``shadow copy'' of the implementation with inputs reconstructed from public inputs.

\begin{figure}
\includegraphics[width=\linewidth]{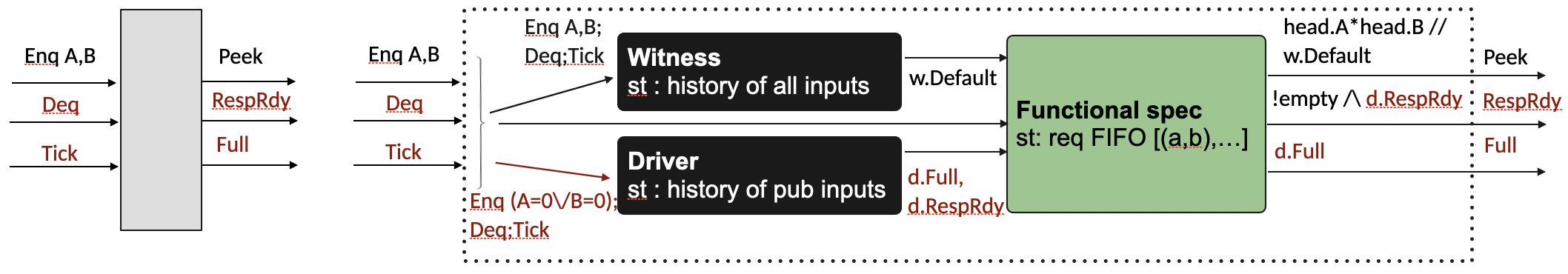}
\caption{Implementation (left) and specification (right) of a pipelined, zero-skip multiplier. 
Functional correctness is specified based on a canonical representation of state as a FIFO of requests.
Leakage-aware nondeterminism is specified using witness and driver components, with canonical representation of state as the history of their inputs.
An implementation is correct and secure if there exist a driver (a function of the history of public inputs; deciding latency) and witness (a function of all inputs; deciding the value of \texttt{Peek}-when-not-ready) such that it is equivalent to the specification instantiated with the parameters.
}
\label{fig:mult:method}
\end{figure}

\subsection{Security Goal and Threat Model}\label{sec:threatModel}
We consider HW/SW contracts based on three components: a software or ISA specification with leakage semantics, a hardware or microarchitectural implementation, and an adversary model specifying what aspects of the microarchitecture are observable.

\paragraph{Adversary capability}
\granite{} considers adversaries with direct access to the microarchitecture's digital I/O boundary, capable of driving input signals and observing output signals at every clock cycle.
We formalize this capability by defining a public observation function $O_{\mu}$ over execution traces of the microarchitectural state machine.
For example, our case study (\Cref{sec:case_study}) defines observations as the cycle-by-cycle ready/valid handshake wires of the external MMIO bus.
Fixing observations at this system boundary means internal microarchitectural transitions need not be audited directly.\footnote{Our intermediate theorems also prove secret independence of other commonly used observations, such as instruction completion time or the timing and addresses of all memory requests.}
The threat model includes digital timing side channels but excludes arbitrary side channels such as power or EM radiation.

\paragraph{Security goal}
\granite{} proves the implementation \emph{satisfies its leakage contract}---the contractual-noninterference property of \Cref{defi:noninterference}, so observations are a function of what the contract declassifies or designates as public---and additionally proves \emph{functional correctness} (refinement of the ISA's I/O behaviour, universally over all programs, data, and inputs). 
\Cref{sec:overview:nondeterminism} generalizes contractual noninterference to nondeterminism and I/O over infinite traces.

\paragraph{ISA specification with leakage semantics}
\granite{}'s methodology is parameterized over an ISA specification defined as a state machine, consisting of architectural states and a transition function, together with a \emph{leakage function} that emits the information the contract declassifies at each step.
We focus on ISAs with sequential, one-instruction-at-a-time semantics on a single core.\footnote{See \Cref{sec:eval:mem} for discussion of generalizing beyond sequential ISAs and to multicore.}
As a concrete example, our case study considers a representative subset of RISC-V under the constant-time policy.
Architectural states are $\sigma = \langle \rf, \pc, \imem, \dmem \rangle$---a register file $\rf$, program counter $\pc$, and separate instruction and data memories $\imem$ and $\dmem$ mapping addresses to bytes (we restrict to non-self-modifying programs)---with $\mathit{step}: \sigma \rightarrow \sigma$ giving the standard ``one-instruction-at-a-time'' fetch-decode-execute transition.
A leakage function $\mathit{leak}$ computes the leakage of an instruction, excerpted below, for a constant-time contract for RISC-V with zero-skip multiplication. 
We additionally treat the byte-encoded instructions themselves as leaked, along with writes to interrupt-related CSRs (c.f.\ \Cref{sec:case_study} explains why).
\begin{lstlisting}[numbers=none,label=lst:leak]
| Add rd rs1 rs2 => LeakAdd (* All inputs are secret. *)
| Beq rs1 rs2 offset => LeakBeq (rf[rs1] = rf[rs2]) (* The branch condition is leaked. *)
| Lw rd rs1 offset => LeakLw (rf[rs1]) (* The load address is leaked. *)
| Sw rs1 rs2 offset => LeakSw (rf[rs1]) (* The store address is leaked (but the store data is not). *)
| Mul rd rs1 rs2 => LeakMul (rf[rs1] = 0 \/ rf[rs2] = 0) (* Whether either argument is zero is leaked. *) 
| Csrrw rd rs1 csr => match csr with
                      | mtvec => LeakCsrrwMtvec rf[rs1]
                      | mie => LeakCsrrwMie rf[rs1]
                      | _ => LeakCsrrw 
                      end
\end{lstlisting}

\subsection{Architectural Specifications Under Nondeterminism}\label{sec:overview:nondeterminism}
Recall the three challenges of \Cref{sec:nondeterminismChallenges}: secure implementation choices affect the leakage trace, nondeterminism can leak secrets, and step sizes mismatch. We resolve all three with a single construction, applying the secret-independent driver and secret-dependent witness that determinized the multiplier of \Cref{sec:overview:example} to the ISA machine.

First, we lower the one-instruction-at-a-time ISA machine into a cycle-accurate state machine, aligning specification steps with implementation cycles and letting both interact with a synchronized model of the external world.
Second, we \emph{determinize} the remaining nondeterminism with explicit existential parameters~\cite{mtisolation}, of the same two kinds as in the multiplier: 
\begin{itemize}
    \item a secret-independent \emph{driver}, fed only public data, that resolves secret-independent choices (i.e.\ when the machine steps and triggers I/O, or when it takes a pending interrupt); and 
    \item an unconstrained \emph{witness}, which may depend on secrets, modeling unspecified behaviour such as the value driven on a data bus while its \texttt{valid} bit is low.
\end{itemize}
A separate secret-independent \emph{leakage transformer} maps the specification's declassified leakage trace to the adversary's cycle-level observations; its existence \emph{is} the security property.
An implementation is then functionally correct and secure when there \emph{exist} a driver, witness, and leakage transformer under which it is trace-equivalent to the resulting deterministic, cycle-accurate specification (\Cref{fig:spec_overview}).
All three parameters are untrusted, so only the cycle-accurate ISA machine---not the implementation nor the parameters---must be audited.
\Cref{sec:spec:hwsw} develops the construction in full, and \Cref{sec:integration} shows that trace equivalence to it implies the classical noninterference property of \Cref{defi:noninterference}.

\begin{figure}
\begin{subfigure}{0.47\textwidth}
    \centering
    \includegraphics[height=3cm]{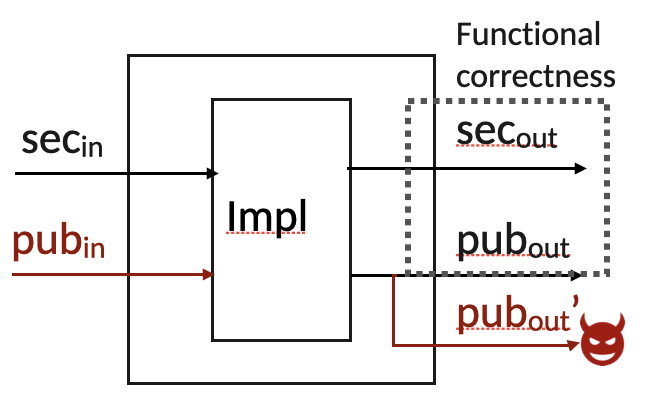}
    \caption{Implementation state machine, cycle-level, with public and secret I/O at the wire level. The public output wires, observed by the adversary, are duplicated for convenience to assert both functional correctness and nonleakage as trace equivalence.}
\end{subfigure}\hfill
\begin{subfigure}{0.52\textwidth}
    \centering
    \includegraphics[height=3cm]{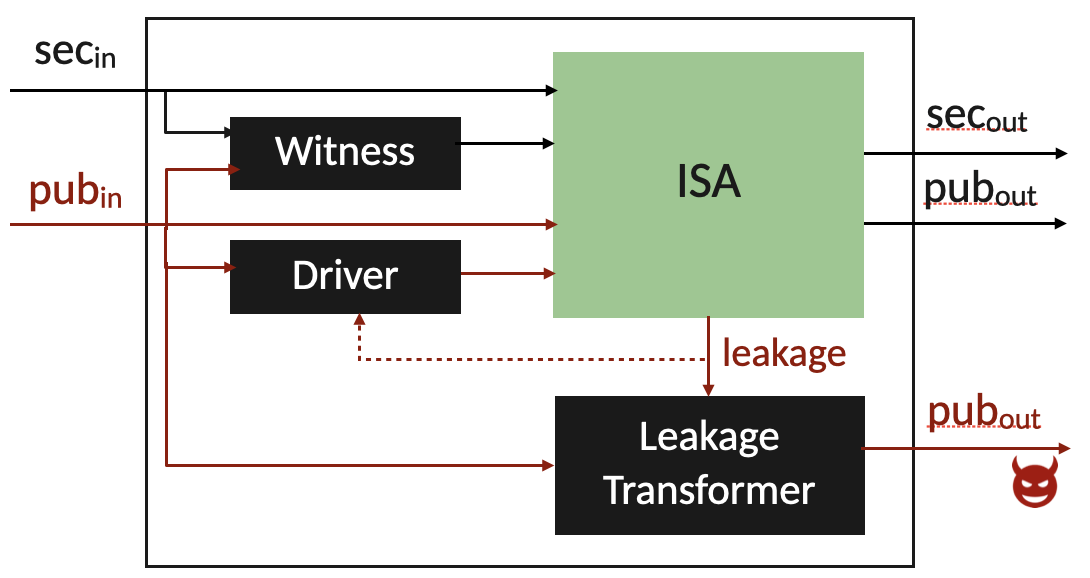}
    \caption{Specification state machine, cycle-level, constructed by parameterizing a trusted ISA machine generating declassified leakages with a secret-independent driver, a secret-independent leakage transformer, and an unconstrained witness. The black boxes are untrusted.}
\end{subfigure}
\caption[Cycle-level specification state machine]{An implementation satisfies a leakage-aware ISA contract---and hence is both functionally correct and secure---if there exist a witness, driver, and leakage transformer under which the implementation and specification traces are equivalent.
Red wires carry public, adversary-observable information.
Only public information is fed into the leakage transformer, which produces the adversary-observable output.}
\label{fig:spec_overview}
\end{figure}

\subsection{Two Axes of Modularity}\label{sec:overview:modularity}
\granite{} decomposes the proof along two axes, which together tame the state-space explosion of monolithic hardware verification and let the framework reason about early-stage designs.

\paragraph{Vertical Modularity.}
\Cref{fig:vertical} shows \granite{}'s layered refinement chain: a constant-time program, a \emph{software-style} ISA contract amenable to software proofs (\Cref{sec:case_study}), a \emph{hardware-style} contract sharing combinational decode/execute/writeback logic (reused across spec and implementation), abstract-submodule and HDL-independent layers (useful for reasoning about early-stage designs and defenses), and finally \quartz{} with a trusted SystemVerilog pretty-printer.
Crucially, refinement preserves not just functional state but leakage: we chain leakage-transformer functions so the implementation's cycle-level leakage trace is provably a function of the top-level specification's.

\paragraph{Horizontal modularity (\Cref{sec:modular}).}
Horizontally, the abstract-hardware design decomposes into submodules---FIFOs, multipliers, branch predictors, memory---that interact only through method calls and are each specified (correctness and nonleakage) and verified independently (as previewed in \Cref{sec:overview:example}). 
This tames state-space explosion and lets a designer swap algorithms or reason about an early-stage defense without redoing the top-level proof; \Cref{sec:modular} develops the details.

\section{Specifying HW/SW Leakage Contracts}\label{sec:spec:hwsw}
This section shows how \granite{} turns a traditional ISA contract into a family of deterministic, cycle-accurate state machines suitable for leakage-aware refinement proofs.
We build up to, and present details of, the specification in \Cref{fig:spec_overview} supporting I/O and interrupts.
Starting with an example baseline ISA state machine, we lower it to a cycle-accurate machine that reconciles the step-size mismatch with implementations (\Cref{sec:spec:baseline}), describe its existential parameters (\Cref{sec:existential_parameters}), and discharge both functional correctness and nonleakage with a single trace-equivalence obligation (\Cref{sec:spec:refinement}).
We then discuss the bug classes ruled out and outline the proof strategy.

\subsection{Lowering an ISA Contract to a Cycle-Accurate Machine}\label{sec:spec:baseline}

\paragraph{State machines.}
We model state machines as Mealy machines. 
A \emph{state machine} $M : \mathit{Machine}\ I\ O$ is a
triple $(\mathit{State}, s_0, \delta)$ of a state type, an initial state
$s_0 \in \mathit{State}$, and a transition function $\delta : \mathit{State} \times I \rightarrow O \times \mathit{State}$. 
Driving $M$ with an input sequence $\vec{\imath_n} = i_1 \cdots i_n$ yields
an $n$-element output trace, written $\mathit{Tr}_M^n(\vec{\imath})$.
Both the specifications and implementations are Mealy machines; this section works toward equating their traces.

\paragraph{Baseline ISA machine.}
The baseline ISA is a Mealy machine over architectural states
$\sigma = \langle \rf, \pc, \imem, \dmem \rangle$---a register file, program counter, and separate instruction and data memories (\Cref{sec:threatModel}). 
Instructions are encoded as an inductive type $\mathit{Instr}$ with a decode function $\mathit{decode}$ mapping machine words to $\mathit{Instr}$ and an execute function $\mathit{exec} : \sigma \times \mathit{Instr} \rightarrow \sigma \times \mathit{LeakEvent}$ that updates the architectural state and emits the per-instruction leakage event prescribed by the contract.
The baseline one-instruction-at-a-time machine, which takes no inputs, has the following state-transition function:
$$ \mathit{step} : \sigma \times \mathit{unit} \rightarrow \sigma \times \mathit{LeakEvent}
   \triangleq \lambda s.\ \mathit{exec}(s, \mathit{decode}(\imem[s.\pc])).$$

\paragraph{Supporting I/O with a wire-level, cycle-level machine}
We want to support I/O (we use MMIO as a running example).
As such, we aim to relate specifications and implementation \emph{cycle-by-cycle}, so that both react to the same external inputs at the same instant.
However, $\mathit{step}$ executes a whole instruction atomically, whereas an implementation may spread one instruction across many cycles and retire nothing at all on a stalled cycle.
In addition, an MMIO load request can block until a response is received, breaking the one-instruction-at-a-time semantics.
Furthermore, we need to add a trusted wire-level interface to support MMIO: to demonstrate, we extend $\sigma$ with a handshaking ready/valid/data interface, considering ready/valid signals to be public and data signals to be secret.

\paragraph{Cycle-accurate machine with I/O}
We lower the ISA into a cycle-accurate machine of type
$$ \mathit{Cycle{-}ISA} : \mathit{Machine}\ (\mathit{PubIn} \times \mathit{SecIn} \times \mathit{DriverOut} \times \mathit{WitnessOut})\ (\mathit{PubOut} \times \mathit{SecOut} \times \mathit{LeakEvent}) $$
whose transitions are split so that each carries at most one distinct leakage or observable effect, letting a single implementation cycle map to zero or more specification steps.
As shown in \Cref{fig:spec_overview}, the machine is existentially parameterized with a driver and witness:
\begin{align*}
\mathit{Driver} &: \mathit{Machine}\ (\mathit{PubIn} \times \mathit{LeakEvent})\ \mathit{DriverOut}  \\
 \mathit{Witness} &: \mathit{Machine}\ (\mathit{PubIn} \times \mathit{SecIn})\ \mathit{WitnessOut} 
\end{align*}
responsible for resolving secret-independent choices (in this case, whether to take a step and cause external I/O observations from sending/receiving MMIO requests) and plausibly-secret-dependent choices (e.g.\ the value of the data wire when the valid signal is low) respectively.
At heart this machine is still the trusted OIAAT ISA, now emitting its per-instruction leakage stepwise (under the constant-time policy: branch conditions, load/store addresses, and inputs to variable-latency instructions);
it is the only component that must be audited; the driver and witness are untrusted.

Here, $ \mathit{Cycle{-}ISA} $ is decomposed into three stages (extended in \Cref{fig:spec_state_machine} with interrupts) in which MMIO is the only observable effect:
\begin{enumerate}
    \item \texttt{StepLeak}: when driven, emits the leakage of the next instruction (at the program counter) without executing it.
    \item \texttt{StepInstr}: when driven, runs one fetch--decode--execute to completion---except when the instruction issues an MMIO request, in which case it emits the observable request (by appropriately setting ready/valid/data wires) and waits.
    \item \texttt{StepWaitMMIOResp}: when driven, consumes an MMIO response if one is available (else blocks). \emph{Which cycle} the response arrives is observable (altering the \texttt{ready} signal) and may differ from when the external world sent it.
\end{enumerate}

\paragraph{The wire-level interface and functional correctness.}
A trusted layer bridges the gap between the specification's abstract ``send an MMIO request'' with the wire-level ``ready/valid'' I/O interface: the outgoing \texttt{valid} is held high while a request is pending, and the response is consumed when the input \texttt{ready} is also high; similarly, the outgoing \texttt{ready} is held high exactly when the machine is in \texttt{StepWaitMMIOResp}.
This interface specifies functional correctness: the ISA spec enforces that outgoing requests correspond to MMIO requests in the ISA semantics.

\paragraph{Supporting interrupts}
In the above, an implementation has freedom to vary timing (in a secret-independent manner) on the MMIO interface and to set the MMIO data wire when the valid signal is low; we additionally want to support interrupts.
Typically, interrupts must be precise, corresponding to a sequential execution of the program that can take an interrupt after an instruction, but the implementation can choose after which instruction to take an interrupt.
This choice should be independent of secrets but is a true nondeterministic choice that depends on I/O and alters the instructions being executed.

\begin{figure}
\small
\begin{tikzpicture}[
    block/.style={
        draw, 
        rectangle, 
        minimum width=1.5cm, 
        minimum height=0.5cm, 
        align=center,
        rounded corners=2pt,
        thick
    },
    arrow/.style={
        -{Stealth[scale=1.2]},
        thick
    },
    label style/.style={
        font=\small, 
        inner sep=5pt
    }
]
    \node[block] (leak) {StepLeak};
    \node[block, right=1cm of leak] (instr) {StepInstr};
    \node[block, right=1.75cm of instr] (mmio) {StepWaitMMIOResp};
    \node[block, right=1.5cm of mmio] (interrupt) {StepInterrupt};

    \draw[arrow] (leak) -- (instr) 
        node[pos=0.5, below, label style] {\textcolor{red}{leakage}};

    \draw[arrow] (instr.east) -- ++(1,0) |- (mmio.west)
        node[pos=0.25, above, label style] {(is MMIO)}
        node[pos=0.25, below, label style] {\textcolor{red}{sendReq}};
        
    \draw[arrow] (instr.south) -- ++(0,0) |- ([yshift=-0.5cm]mmio.south) -| (interrupt.south)
        node[pos=-0.3, below, label style] {(not MMIO)};

    \draw[arrow] (mmio.east) -- (interrupt.west)
        node[pos=0.5, below, label style] {\textcolor{red}{recvResp}};

    \draw[arrow] (interrupt.north) -- ++(0,0) |- ([yshift=0.5cm]leak.north) -| (leak.north)
          node[pos=-0.3, above, label style] {\textcolor{blue}{takeInterrupt?}};;

\end{tikzpicture}
\caption{The $\mathit{Cycle{-}ISA}$ state machine. The driver directs the machine to take steps and decides when an interrupt is taken. The machine outputs declassified leakage and drives wires on the MMIO interface.}\label{fig:spec_state_machine}
\end{figure}

We extend our three-state $\mathit{Cycle{-}ISA}$ machine to support interrupts with a fourth \texttt{StepInterrupt} state, summarized in \Cref{fig:spec_state_machine}.
$\mathit{DriverOut}$ is extended to provide (to \texttt{StepInterrupt}) a Boolean argument for whether to trap: if true, an interrupt is pending, and interrupts are enabled, then the machine traps to the handler;  otherwise, no interrupt is taken. This staging lets the implementation choose at which instruction boundary to trap an interrupt.

\paragraph{Generality.}
\Cref{sec:eval:mem} discusses how this transformation generalizes to settings with multiple ``next states,'' such as verifying a processor independently of memory, where instruction- and data-memory requests are part of the external interface.
To allow speculative loads with secret-independent timing and addresses, the machine can be extended with, e.g., a \texttt{SendLoadReq addr} directive issuable by the secret-independent driver at any step.

\subsection{Determinizing Nondeterminism via Existential Parameters}\label{sec:existential_parameters}
Recall from \Cref{sec:overview:nondeterminism} that the cycle-level specification is parameterized by three untrusted black boxes; we give their types and their roles in the context of an ISA specification.

\paragraph{The driver} $\mathit{Driver} : \mathit{Machine}\ (\mathit{PubIn} \times \mathit{LeakEvent})\ \mathit{DriverOut}$, fed only public data, emits a sequence of commands to the ISA machine. 
In the baseline of \Cref{fig:spec_state_machine} a command is a list of Booleans dictating how many steps to take and whether to trap (ignored unless at \texttt{StepInterrupt}); when a state admits multiple transitions (e.g.\ exposing memory requests, c.f.\ \Cref{sec:eval:mem}) it selects the transition, such as whether to issue a speculative load.

\paragraph{The witness} $\mathit{Witness} : \mathit{Machine}\ (\mathit{PubIn} \times \mathit{SecIn})\ \mathit{WitnessOut}$ models unspecified, possibly secret-dependent behaviour---in our case study, the MMIO output wire on cycles when \texttt{valid} is low, which the consumer must ignore. \Cref{sec:modular} shows composition with a parent module removes witness-induced leakage from the TCB.

\paragraph{The leakage transformer} $\mathit{LT} : \mathit{Machine}\ (\mathit{PubIn} \times \mathit{LeakEvent})\ \mathit{PubOut}$ maps cycle-level leakage to adversary observations using public data only. 
Its existence is the security property: observations---including timing signals such as ready/valid---reveal nothing beyond what the contract declassifies.

\subsection{Functional Correctness and Nonleakage via Trace Equivalence}\label{sec:spec:refinement}
The above construction allows a single trace-equivalence theorem to discharge both properties at once (and a trivial transitivity property, useful for modular proofs).
Composing the pieces as in \Cref{fig:spec_overview}, the specification machine
$$\mathit{Spec}_{\{\mathit{LT,Driver,Witness}\}} : \mathit{Machine}\ (\mathit{PubIn} \times \mathit{SecIn})\ (\mathit{PubOut} \times \mathit{PubOut} \times \mathit{SecOut})$$ 
exposes two public-output channels: one driven by the leakage transformer (observations) and one by the ISA's functional-correctness definition.
Lifting the implementation $\mathit{\mu} : \mathit{Machine}\ (\mathit{PubIn} \times \mathit{SecIn})\ (\mathit{PubOut} \times \mathit{SecOut})$ to the same interface by duplicating its public-output wires lets both properties be discharged by a standard trace-equivalence theorem (this duplication is not essential; the equality can be asserted separately).

\begin{defn}[Correctness and nonleakage under a leakage-aware contract]\label{defi:correct}
    An implementation is functionally correct and nonleaking under a leakage-aware ISA if there exist leakage-transformer, driver, and witness machines such that the implementation and specification traces are equivalent (for all cycle counts $n$ and public and secret inputs, $\vec{\mathit{Pub}_n}$ and $\vec{\mathit{Sec}_n}$, of length $n$):
 \begin{align*}
    & \exists \mathit{LT}, \mathit{Driver}, \mathit{Witness}.\ \forall n, \vec{\mathit{Pub}_n}, \vec{\mathit{Sec}_n}. \\
    & \quad O_{\mu}^n(\vec{\mathit{Pub}_n},\vec{\mathit{Sec}_n})
        = \mathit{LT}^n(\vec{\mathit{Pub}_n}, O_{\mathit{Cycle{-}ISA}\{\mathit{Driver}, \mathit{Witness}\}}^n(\vec{\mathit{Pub}_n}, \vec{\mathit{Sec}_n}))
        \;\wedge \\
    & \quad \mathit{Func}_{\mu}^n(\vec{\mathit{Pub}_n},\vec{\mathit{Sec}_n})
        = \mathit{Func}_{\mathit{Cycle{-}ISA}\{\mathit{Driver}, \mathit{Witness}\}}^n(\vec{\mathit{Pub}_n}, \vec{\mathit{Sec}_n}),
\end{align*}
where $O_{\mu}$ denotes the adversary observation trace (of type $\mathit{list\  PubOutput}$), $\mathit{Func}$ the functional (I/O) trace (of type $\mathit{list\ (PubOutput \times SecOutput)}$), and $O_{\mathit{Cycle{-}ISA}}$ the specification leakage trace. Equivalently, as a single-trace equation:
 \begin{align*}
    & \exists \mathit{LT}, \mathit{Driver}, \mathit{Witness}.\ \forall n, \vec{\mathit{Pub}_n}, \vec{\mathit{Sec}_n}. \\
    & \mathit{Tr}^n_{\mu}(\vec{\mathit{Pub}_n}, \vec{\mathit{Sec}_n}) = 
    \mathit{Tr}^n_{\mathit{Spec}_{\{\mathit{LT,Driver,Witness}\}}} (\vec{\mathit{Pub}_n}, \vec{\mathit{Sec}_n}) 
\end{align*}
\end{defn}

\subsection{Classes of Bugs Ruled Out by the Top-Level Specification}
Proving trace equivalence between the top-level specification and synthesizable RTL rules out entire classes of microarchitectural and security bugs without enumerating individual attacks, yielding robustness to attacks-not-yet-discovered.

\paragraph{Functional-correctness bugs ruled out.}
Equivalence to a sequential, OIAAT ISA specification eliminates standard and edge-case correctness bugs:
incorrect instruction-decoding and execution-logic errors, data hazards and pipeline races, and imprecise traps (for precise exceptions and asynchronous interrupts, bugs that trigger mid-instruction or corrupt saved contexts; the \texttt{StepInterrupt} stage guarantees interrupts land only at instruction boundaries).

\subsubsection{Timing-side-channel and security bugs ruled out}
Timing leakage is an emergent property of subtle microarchitectural interactions that classical
functional testing ignores. \granite{}'s noninterference proof rules out vulnerabilities---without
naming speculated instructions or microarchitectural structures---such as:
\begin{itemize}
    \item Transient/speculative-execution attacks (e.g.~Spectre, Meltdown): side effects of transiently executed instructions (on misspeculated, unreachable paths) cannot leak secrets.
    Because the specification declassifies leakage only for instructions on correct paths, speculative state changes remain secret-independent; a secret-dependent speculative load or branch with observable timing effects would make it impossible to construct a secret-independent shadow machine with matching timing for all public data.
    \item Speculative interference and port-contention leaks: younger, misspeculated instructions cannot alter timing of older, bound-to-retire instructions via contention on ALUs, MSHRs, or functional units, since such instructions do not appear in the specification
    leakage trace.
    \item Secret-dependent interrupt timing leaks: classical refinement permits handling an interrupt immediately when a secret is zero but delaying it otherwise;
    \granite{} rules out this behaviour.
\end{itemize}

\subsection{Proof Strategy and Instantiating Existential Parameters}\label{sec:overview_proof_strategy}
\Cref{defi:correct} yields a formulaic proof strategy: construct a deterministic specification machine and cycle-level bisimulation relation, reducing the proof obligation to single-cycle preservation of a simulation relation between three machines---the $\mathit{Cycle{-}ISA}$ machine, the implementation, and the ``shadow machine'' (a copy of the implementation that runs only on public data).
We summarize the elements below (see \Cref{sec:case_study} for detail).

\paragraph{1) Instantiating the existential parameters}
The witness is simply a copy of the implementation machine.
The leakage transformer and driver can be instantiated with shadow copies of the implementation containing only public data.
As the shadow copies execute, they replace data dependent on ISA-declassified information with information from leakage events (e.g.\ the address of a load/store request is replaced with the address from the leakage trace).
The shadow machines differ from the implementation on, e.g., register-file contents and parts of microarchitectural state, but they will likely agree on aspects of microarchitectural state influencing control flow (such as branch-predictor state).
The specification does not prescribe what the shadow machine consists of, and there are multiple valid instantiations (for example, leakage information can be threaded through at any point before it affects observable timing).
The driver essentially precomputes whether, in the upcoming cycle, the next instruction reaches its visibility point (allowing a leakage event to be generated), an observable event occurs (e.g., sending/receiving an MMIO request/response), or it is resolved whether an interrupt is taken after the previously committed instruction.

\paragraph{2) Relating the implementation state to the ISA state machine.}
This element is standard in functional-correctness proofs in the style of Fjfj and Kami.
We state a simulation relation between specification and implementation state as a flushing invariant~\cite{su1996automatic}: an instruction bound-to-commit in the implementation relates to the specification's state after that instruction has committed.
Nondeterminism not-yet-resolved, such as whether an interrupt is taken, is delayed in the specification until the implementation resolves it.
In other words, the driver cannot drive choices with observable side effects (MMIO loads/stores, taking interrupts) until the implementation has resolved the choices (in this case, one would fail to construct a valid proof).

\paragraph{3) Relating the implementation state to the shadow machine.}
Given the shadow machine above, the implementation and shadow machine share the same microarchitectural state.
The relation amounts to designating which signals are public (typically signals related to control logic or that affect timing) and equating them---e.g., branch-predictor state, state derived from fetched instructions, and the valid bits of pipeline FIFOs. 
\Cref{sec:modular} showcases how an abstract, method-call-history representation of submodules supports a lightweight strategy of asserting that public state is equivalent across submodules based on asserting equivalence of public traces, independent of their concrete implementation.

\section{Horizontal Modularity via Modular Abstractions}\label{sec:modular}
\granite{} is also \emph{horizontally} modular: a design at the abstract hardware layer is decomposed into submodules---FIFOs, multipliers, branch predictors, the memory subsystem---that interact only through method calls.
Each submodule is specified by its own contract capturing both functional correctness and leakage, independent of the rest of the design.

\paragraph{One method at a time}
The key enabler is a \emph{one-method-at-a-time} abstraction, the method-call analogue of the one-rule-at-a-time discipline of rule-based HDLs such as Bluespec~\cite{bluespec-memocode04}.
Rather than reasoning about concurrently updating wires, \granite{} restricts intercomponent communication to method calls with a deterministic, sequential semantics: each call is evaluated atomically, and cycle-accurate timing is recovered by a \texttt{tick} method whose top-level invocation constitutes one clock cycle.
This structure lets each submodule be verified in isolation by simulation against its specification and then substituted into the larger design, so the whole-processor proof reasons over canonical representations and abstracts over implementation details of subcomponents.
This intercomponent modularity is formalized in Rocq using freer monads to represent component interactions, with the \emph{one-method-at-a-time} sequential abstraction. 

\paragraph{Substitution principle}
A \emph{substitution principle}, proven sound based on the one-method-at-a-time reasoning, allows substitution in proofs.

\begin{defn}[Leakage refinement]
Let $\mathit{SubImpl}$ be a Mealy machine and $\mathit{SubSpec}[\cdot]$ be a parameterized family of Mealy machines.
We say a $\mathit{SubImpl}$ \emph{leakage-refines} its local specification $\mathit{SubSpec}$ if there exist parameters $\mathit{SubParams}$ such that $\mathit{SubImpl}$ is equivalent to $\mathit{SubSpec[SubParams]}$ for all inputs, denoted as $\mathit{SubImpl} \sqsubseteq_{\exists} \mathit{SubSpec[\cdot]} \triangleq \exists \mathit{SubParams}.\ \mathit{SubImpl} \equiv \mathit{SubSpec[SubParams]}$.
\end{defn}

\begin{theorem}[Substitution principle]\label{thm:substitution}
Let $\mathit{SubImpl} \sqsubseteq_{\exists} \mathit{SubSpec[\cdot]}$ as above.
If a top-level implementation $\mathit{Impl}(\cdot)$ verified with submodule $\mathit{SubSpec}[\cdot]$ is equivalent to $\mathit{Spec}$ for all choices of $\mathit{SubParams}$, then the $\mathit{Impl}$ instantiated with the concrete submodule $\mathit{SubImpl}$ is equivalent to the specification:
\begin{align*}
    & \mathit{SubImpl} \sqsubseteq_{\exists} \mathit{SubSpec[\cdot]} \wedge \left( \forall \mathit{SubParams}.\, \mathit{Impl(SubSpec[SubParams])} \equiv \mathit{Spec} \right) \\
    & \rightarrow  \mathit{Impl(SubImpl)} \equiv \mathit{Spec} 
\end{align*}
Composing with parameterization of $\mathit{Spec}$ follows readily.
\end{theorem}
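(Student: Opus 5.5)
The argument has two parts. First we eliminate the existential by instantiating the universal hypothesis. Second, and this carries the real content, we show that trace equivalence of a submodule is a congruence for the context $\mathit{Impl}(\cdot)$. Equivalence here means equality of output traces $\mathit{Tr}^n$ for all $n$ and all input sequences.

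The first step is purely logical. From $\mathit{SubImpl} \sqsubseteq_{\exists} \mathit{SubSpec}[\cdot]$ we obtain concrete parameters $P$ with $\mathit{SubImpl} \equiv \mathit{SubSpec}[P]$. Instantiating the hypothesis $\forall \mathit{SubParams}.\ \mathit{Impl}(\mathit{SubSpec}[\mathit{SubParams}]) \equiv \mathit{Spec}$ at $P$ gives $\mathit{Impl}(\mathit{SubSpec}[P]) \equiv \mathit{Spec}$. Trace equivalence is an equality of traces and hence transitive, so it remains to prove the congruence lemma: $S_1 \equiv S_2$ implies $\mathit{Impl}(S_1) \equiv \mathit{Impl}(S_2)$.

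For the congruence lemma we rely on the one-method-at-a-time semantics. Within one top-level \texttt{tick}, $\mathit{Impl}$ interacts with its submodule only through a finite, deterministic sequence of method calls, represented as a freer-monad program. Each call feeds an input to the submodule's Mealy machine and receives an output. The parent's own state, and the inputs it issues, are functions of the outputs it has received so far.

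We prove by induction on the number of cycles $n$ (and, inside each cycle, by induction on the freer-monad program structure) the following invariant. For every top-level input prefix, the parent-state components of $\mathit{Impl}(S_1)$ and $\mathit{Impl}(S_2)$ are equal. Moreover, the sequences of method inputs sent so far to $S_1$ and to $S_2$ are identical; call this sequence $\vec{m}$. The inductive step is as follows. Suppose the next method call has the same input in both runs, and the submodule's trace on $\vec{m}$ followed by that input agrees in both runs by $S_1 \equiv S_2$, applied to the extended input sequence. Then the returned outputs coincide, so the parent continues identically. Consequently the top-level outputs coincide at every cycle, which gives $\mathit{Impl}(S_1) \equiv \mathit{Impl}(S_2)$.

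I expect the main obstacle to be this congruence step. Trace equivalence compares only output histories, never internal states, so we cannot simply build a state bisimulation between $S_1$ and $S_2$. The invariant must therefore be stated over histories: the submodule is determined by the input history delivered to it. If the formalism requires it, we would first restate each submodule as the function from input history to output that Mealy-machine determinism guarantees. A further subtlety is that method calls are nested within a single tick, so the induction must proceed method call by method call rather than cycle by cycle; this is why the freer-monad representation and the atomic sequential call semantics are essential. Finally, to compose with a parameterized $\mathit{Spec}$, we keep $\mathit{Spec}$'s own existential parameters fixed throughout and apply the same transitivity argument.
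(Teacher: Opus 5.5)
Your proof is correct and follows the paper's route. You take the witnessed parameters $P$, instantiate the universal hypothesis at $P$, and close by transitivity, using the fact that a parent driving the same method sequence (evaluated one method at a time through \texttt{evalProg}) into equivalent submodules cannot tell them apart. The only difference is presentational: in the paper equivalence is literally a state simulation (\texttt{simulates}/\texttt{refines Rel}), so the congruence step lifts \texttt{Rel} through \texttt{evalProg} by induction on the program, and the obstacle you anticipate (no state bisimulation being available) does not really arise, since for deterministic Mealy machines ``reached by the same input history'' is exactly such a relation and your history invariant is the same argument.
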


\subsection{Hardware Semantics}\label{sec:hardware_semantics}

To formalize modular hardware components within Rocq, \granite{} uses a lightweight program embedding structured as a freer monad, summarized in \Cref{fig:monads}.

\begin{figure}[h]
\begin{multicols}{2}
\begin{lstlisting}
Inductive prog {Method: Type -> Type} {Ret: Type} :=
| Return : Ret -> prog 
| Call {A} : Method A -> (A -> prog) -> prog.
Record Spec {Method: Type -> Type} :=
{ State : Type;
  EvalMethod {A} : Method A -> State -> A * State; 
  initialState: State }
Record Module {Method: Type -> Type} :=
{ ModuleMethod: Type -> Type;
  ModuleBase : Spec ModuleMethod;
  ModuleProgram: forall Ret, Method Ret -> 
                 prog ModuleMethod Ret }
Fixpoint evalProg 
  {Method} (spec: Spec Method) {R}
  (p: prog Method R) (st: spec.(State))
  : R * spec.(State) :=
  match p with
   | Return e => (e, st)
   | Call m f => let '(r, st') := spec.(EvalMethod) m st in
                 evalProg (f r) st' 
  end.
Definition refines Rel :=
  forall s1 s2, Rel s1 s2 ->
  (forall method r s1', spec1.EvalMethod method s1 = (r, s1') ->
            exists s2', spec2.EvalMethod method s2 = (r, s2') /\ 
                        Rel s1' s2').
Definition simulates (spec1 spec2: Spec Method) :=
  exists Rel, Rel spec1.initialState spec2.initialState $\wedge$ 
    refines Rel.
\end{lstlisting}
\end{multicols}
\caption{The freer-monad encoding of modular hardware in Rocq. For verification convenience (elided here), our Rocq implementation distinguishes value methods, which do not alter state, from action methods.}\label{fig:monads}
\end{figure}

A hardware program of type $\texttt{prog}$ is parameterized by a method signature $\texttt{Method}$ and return type \texttt{Ret} (lines 1-3).
A component specification is a record type \texttt{Spec} containing the internal state type, a per-method evaluation function, and an initial state (lines 4-7).
A hardware module is an instance of \texttt{Module}, associating an interface signature with a base specification and an implementation of each method (lines 8-12).
Programs are interpreted sequentially by invoking the base specification at each method call (lines 13-21) and can be lifted to define component specifications (lines 22-24).

Refinement between two component specifications sharing a method interface is a simulation relation over their state spaces: \texttt{Rel} is a simulation if, for any two related states, executing the same method yields equal output and related successor states (lines 25-29).
This definition gives rise to a substitution principle, where a parent module driving the same method sequence into the implementation and specification cannot tell them apart.

The encoding is independent of the underlying HDL, enabling language-agnostic reasoning and early-stage exploration. 
When state and programs are written in a synthesizable fragment of Gallina, they can be proven to correspond to designs in a deeply embedded HDL and synthesized to RTL, for an end-to-end theorem.

\subsection{Intercomponent Modularity}\label{sec:intercomponent_modularity}
Modules are specified as abstract specifications exposing method calls, formalized as Mealy machines whose inputs are method calls.
We adopt two modeling principles.
First, we use \emph{canonical} functional representations (e.g., a hardware FIFO realized as a circular buffer has many states corresponding to a FIFO with one element; we represent FIFOs as lists).
Second, we extend the state representation with a \emph{leakage history} that projects the trace of method calls (inputs) down to the information permitted to influence timing (the public inputs)\footnote{The leakage history could also carry data declassified during execution, as in the top-level ISA spec, but our case studies did not require such declassification.}.
Submodules are existentially parameterized with the same parameters in \Cref{sec:existential_parameters}: 
 a secret-independent driver over the module's public leakage history fixes its timing, and an unconstrained witness supplies its unspecified outputs.

\subsubsection{Example: zero-skip multiplier}\label{sec:multiplier}
Recall the multiplier from \Cref{sec:overview:example}.
\Cref{listing:multiplier} gives the Rocq spec.

\begin{figure}[h]
\begin{multicols}{2}
\begin{lstlisting}
Record req_t := { a : bv width; b : bv width }.
Inductive Method: Type -> Type :=
| Enq (arg: req_t) : Method unit
| Deq              : Method unit
| Tick             : Method unit
| RespReady        : Method bool 
| Full             : Method bool 
| Peek             : Method (bv (width + width)). 
Inductive LeakEvent := 
| LeakEnq (zeroArg: bool) | LeakDeq | LeakTick.
Definition leakTrace_t := list LeakEvent.
Definition trace_t := list (Method unit).
Class specParams :=
{ default_peek: trace_t -> bv (width + width)
; resp_ready : leakTrace_t -> bool 
; is_full : leakTrace_t -> bool }.
Definition leakage (tr: trace_t) : leakTrace_t :=
  map (fun m => 
    match m with 
    | Enq req => LeakEnq (req.a = 0 || req.b = 0) 
    | Deq => LeakDeq | Tick => LeakTick end) tr.
Record St := { reqs: list req_t; hist : trace_t }.
Definition enq (req: req_t) (st: St) :=
  {| reqs := if is_full (leakage st.hist) then st.reqs 
             else st.reqs $++$ [req];
     hist := st.hist $++$ [Enq req] |}.
Definition deq (st: St) :=
  {| reqs := if resp_ready (leakage st.hist)
             then list.tail st.(reqs) else st.(reqs);
     hist := st.hist $++$ [Deq] |}.
Definition tick (st: St) :=
  {| reqs := st.reqs;
     hist := st.hist $++$ [Tick] |}.
Definition full (st: St) := params.is_full (leakage st.hist).
Definition respReady (st: St) :=
    match st.(reqs) with   
    | [] => false (* for functional correctness *)
    | _ => resp_ready (leakage st.hist) end.
Definition peek (st: St) :=
  if respReady (leakage st.hist) then
    match st.reqs with 
    | [] => default_peek st.hist 
    | req::_ => req.a * req.b end 
  else default_peek st.hist.
Definition evalMethod {A} (m: Method A) (st: St) : A * St :=
  match m with
  | Enq req => ((), enq req st)
  ...
  | Full => (full st, st) end.
\end{lstlisting}
\end{multicols}
\caption{Zero-skip multiplier specification in Rocq. NB: simplified to remove value/action method distinction.}\label{listing:multiplier}
\end{figure}

\paragraph{Functional correctness.}
The multiplier is functionally correct as long as it computes the product of its inputs and returns results in FIFO order.
\texttt{Full} and \texttt{RespReady} act as the ready/valid handshake.
Unlike conventional Bluespec models, where method conflicts block illegal transitions, here \texttt{Enq} and \texttt{Deq} may always fire---enabling purely local reasoning about each method, which simplifies verification; we synthesize circuits from this style in \Cref{sec:integration}.
It is therefore the  caller's responsibility to check \texttt{Full} before an \texttt{Enq}, and \texttt{RespReady} before a \texttt{Deq} or using the result of a \texttt{Peek}.

\paragraph{Security: the leakage contract.}
A multiplier implementation is secure if its timing behaviour is a deterministic function of the specification's declassified leakage trace---here, the history of action-method calls (\texttt{Enq}, \texttt{Deq}, and \texttt{Tick}) with each \texttt{Enq} recorded along with whether either input argument is zero.
Zero-skip optimization is permitted because that bit is logged in the trace; all other latency variation must be a function of the trace, ruling out operand-dependent timing channels.
It is the caller's responsibility to ensure that the zero-operand bit is indeed public.

\paragraph{Specification structure.}
The specification state tracks a FIFO of outstanding requests and a history of past operations.
To admit a range of implementation choices (varying pipeline depths, multiplication algorithms), it is parameterized over two driver functions of only the public leakage trace, \texttt{resp\_ready} and \texttt{is\_full}, so response time depends exclusively on public or explicitly declassified data.
To model unspecified behaviour, such as the value of \texttt{Peek} when \texttt{RespReady} is false (corresponding to the data wire while \texttt{valid} is low), the specification includes a witness parameter \texttt{default\_peek}, letting that output vary or even hold secret-dependent values while invalid and shifting the obligation to the consumer to ignore it until \texttt{valid}.
The \texttt{tick} method marks clock-cycle progression, giving implementations a point to update internal state.

\paragraph{Proof structure}
A concrete implementation realizes every method using a synthesizable fragment of Gallina (e.g.~using bitvectors and vectors rather than naturals and lists). 
We provide both a shallow-embedded and a deeply-embedded-HDL implementation (\Cref{sec:case_study}).
As in \Cref{sec:overview_proof_strategy}, each existential parameter is instantiated with a shadow copy of the implementation: the witness \texttt{default\_peek} runs the implementation on the method-call trace, while the secret-independent parameters \texttt{resp\_ready} and \texttt{is\_full} run shadow machines on \emph{lifted} method calls:
\begin{lstlisting}
Definition lift_leakage_event (ev: LeakEvent) : Method unit :=
  match ev with
  | LeakDeq => Deq | LeakTick => Tick 
  | LeakEnq zero_arg =>
      Enq (if zero_arg then  {| a := zeroes; b := zeroes |} else  {| a := ones; b := ones |} )                     
  end.
\end{lstlisting}
The proof writer then states and proves a simulation relation between the implementation and both the specification and shadow machines.
Against the specification, the relation maps valid implementation states to the canonical list-of-requests representation.
Against the shadow machine, it relates public information: equality of public registers and ``leakage equivalence'' of states (e.g.~data derived from requests need not be equal, only equal modulo whether either operand was zero).
The simulation lets the top-level proof treat the multiplier as a black box, substituting any compliant implementation, and allowing a high-level simulation relation in terms of the abstract multiplier's state (FIFOs and traces of method calls).

\subsubsection{Specifying the memory}\label{sec:memory}
Memory is modeled abstractly in a similar style and with the same API as the multiplier (with a different enqueue request type and peek response type).
Functionally, it is a map from address to byte together with a FIFO of requests and the history of method calls updating state (enqueue, dequeue, and tick).
This pattern is standard for pipelined submodules that service requests in order, one at a time\footnote{We can extend to out-of-order modules by relaxing the FIFO nature of the request queue and adding a secret-independent parameter to choose which request is handled next.}.
As with the multiplier, response time is asserted to be a function of public information; here the leakage trace includes addresses of requests but not data. 
This work does not tackle verifying memory hierarchies, and so, the spec of memory is trusted.

\subsection{Intracomponent Modularity}
Within a single component such as the processor core, the \texttt{tick} cycle-update function is decomposed into sequential steps for individual pipeline stages (fetch, decode, execute, writeback). 
This lets designers and verification engineers establish Hoare-style pre-/postconditions at stage and function boundaries---structuring each stage to preserve a simulation relation, and supporting both high-level and optimized low-level implementations of functions.

\section{Verification Case Study: A Pipelined Processor}\label{sec:case_study}
We verify a four-stage pipelined core implementing a representative subset of RISC-V---with arithmetic, memory, control, and CSR instructions---against the cycle-accurate specification of \Cref{sec:spec:hwsw}, composed with the model of memory of \Cref{sec:memory}.

\paragraph{Contract bug found in our work.}
We initially verified our processor directly against the constant-time policy: it declassifies load/store addresses, branch conditions, and whether a multiplier operand is zero. 
Verifying hardware against the contract exposed a declassification our software-level intuition had missed: writes to the interrupt-configuration CSRs (\texttt{mie}, \texttt{mtvec}) must themselves be leaked, because the processor jumps to \texttt{mtvec} on an interrupt, and thus the trap target influences observable control flow. 
Prior work verifying constant-time software down to RISC-V~\cite{bedrock-ct} did not account for this leakage, because reasoning at the ISA level alone does not force the question.
The proof could not be completed without the extra leakage clause, demonstrating that verifying hardware against ISA specs catches subtle, easy-to-miss policy omissions.

\paragraph{Processor stages}
The processor is a standard fetch-decode-execute-writeback pipeline: fetch speculates instruction loads using a BTB-predicted PC and an epoch tag; decode checks the epoch, stalls on scoreboard hazards (and until older instructions commit before a CSR access) then reads registers; execute resolves branches (bumping the epoch and flushing on mispredict) and dispatches variable-latency work to the multiplier and memory; writeback commits results and takes precise exceptions and interrupts (flushing the pipeline if need be).
The case study's focus is not on the pipeline but on aspects connecting it to the spec: the visibility point, shadow core, and driver.

\paragraph{Identifying the visibility point}
The specification is one-instruction-at-a-time, but the implementation has many instructions in-flight.
The \emph{visibility point} of an instruction is the cycle at which it becomes bound-to-commit---no longer discardable by an older instruction's mispredict, exception, or interrupt.
It tells the driver when to advance the specification (for functional-correctness proof) and it marks when an instruction's leakage (e.g.\ a branch direction) is safe to declassify to the shadow core.
It is sufficient to define a visibility point (which could differ based on instruction type) such that any leakage information that could influence execution time in the shadow machine does so strictly after the visibility point.
Then, the leakage information is available to be used early enough by the shadow core to replicate the implementation's timing behaviours.

In our core, we safely define the visibility point as a simple predicate on occupancy of pipeline queues: an instruction is visible when it is at the head of the decode-to-execute (\texttt{d2e}) queue, there are no instructions at the execute-to-writeback (\texttt{e2w}) queue, and it is at the correct epochs.
With \texttt{e2w} empty, no older instruction remains that could trap or be interrupted. 
This visibility point suffices as leakage information is not needed in the shadow machine until the execute stage.
Tighter, state-dependent conditions are possible (e.g.\ a lone, in-flight instruction with correct epochs is always correct-path) but not necessary.

\paragraph{Driving nondeterministic choices}
Analogously, the driver should not drive the specification past nondeterministic choices influencing observables (MMIO, interrupts) until the implementation resolves nondeterministic choices.
For example, the driver should not drive the spec to not take an interrupt until the implementation has resolved to not take an interrupt after the corresponding instruction in the spec. 
The driver is untrusted: a poor choice of driver will not lead to bugs in the security guarantee but simply a failure to prove the theorem.

\paragraph{Constructing the shadow core.}
To prove that the processor's timing is independent, \granite{}'s framework uses a public shadow core that replicates the processor's control logic but operates only on public data.
Data memory is initialized with constant dummy values (e.g.~zero), with data affecting control flow driven by the public leakage trace.
Before the execute stage, the shadow core runs identically to the real core: instructions are fetched speculatively based on the public $\pc$ and public branch-predictor state, and registers read from the shadow core in the decode stage contain dummy data.
At execute, it substitutes leakage from the secret-dependent decisions:
\begin{enumerate}
    \item The output of control logic determining whether a branch is taken for a control instruction is replaced with the Boolean from the leakage trace.
    \item The address of a memory load/store is replaced with the address from the leakage trace.
    \item Arguments to the multiplier are reconstructed based on a Boolean of whether either argument is zero (e.g.\ by passing in zeroes if true and ones if false).
    \item Writes to the \texttt{mie} and \texttt{mtvec} CSRs are replaced accordingly in the shadow machine.
\end{enumerate}
Because every decision that eventually affects timing is replaced by public leakage, the shadow core's timing is a function of only public data.

\paragraph{Relating the implementation and ISA machine}
The driver keeps the specification in-sync with the implementation, advancing it at visibility points as follows:
\begin{itemize}
    \item When an instruction leaves \texttt{e2w} to commit next cycle (its multiplier/memory responses are ready), the specification is at \texttt{StepWaitMMIOResp} or \texttt{StepInterrupt}. The driver advances the specification two steps for an MMIO response and one otherwise, passing the interrupt bit and transitioning to \texttt{StepLeak}.
    \item When an instruction advances from \texttt{d2e} to \texttt{e2w} (hence correct-path), all older instructions have committed, and the specification is at \texttt{StepLeak}. The driver advances the spec two steps (corresponding to \texttt{StepLeak} and \texttt{StepInstr}), ensuring leakage information is available before the implementation executes the instruction.
\end{itemize}

The implementation and ISA state machine are related by a flushing relation~\cite{su1996automatic}: an instruction in \texttt{e2w} is related to the specification after it commits (the spec machine already executed the instruction and updated the register files), and unresolved nondeterminism (e.g.\ whether an interrupt is taken) is delayed in the specification until the implementation resolves it. 

\paragraph{Relating the implementation and shadow machine}
The simulation relation between the implementation and shadow machine equates state with downstream effects on timing. 
For example:
\begin{itemize}
    \item Memory/multiplier: request histories are equal modulo store data (no effect on timing) and zero-operand quotient, respectively, forcing identical latencies.
    \item Pipeline FIFOs: equal occupancy and valid bits (hence identical stalls) and equal public bookkeeping state (such as instructions and epoch bits)
    \item Branch predictor: updated only from public information, so its state (modeled as a trace of method calls) stays public. This invariant is used to guarantee that the PC and speculated instructions remain public.
\end{itemize}

Notably, specifying leakage components of modules in terms of traces of method calls enables a streamlined recipe for stating simulation relations that is independent of underlying implementations.
We prove simulation relations are preserved at each fetch-decode-execute-writeback stage boundary of the processor, for all submodule implementations satisfying their respective specifications (recall \Cref{thm:substitution}).

\section{Integration Verification}\label{sec:integration}
This section describes how software can be verified against the ISA specification via a certified static analysis and extends the processor proof down to RTL via \quartz{}, yielding an end-to-end proof that eliminates the ISA specification from the TCB. 
The static analysis and HDL are not the paper's focus but serve to validate the ISA spec and our methodology for integration verification, culminating in a noninterference theorem for a constant-time Salsa20 binary down to RTL.

\subsection{Software Static Analysis}\label{sec:static_analysis}
We implement a sound (not complete) ISA-level static analysis deciding whether a RISC-V binary is constant-time. 
Implementation and proof of the analysis took $\sim$2 days.

\begin{defn}[Static analysis sound]
$ \forall \mathit{prog}. \; \mathit{analyze}(\mathit{prog}) \Downarrow \mathit{Safe} \rightarrow \mathit{IsConstantTime}_{\mathit{ISA}}(\mathit{prog}),$
where $\Downarrow \mathit{Safe}$ means the analysis terminates with outcome $\mathit{Safe}$.\footnote{In Rocq, formalized as there existing some fuel}
\end{defn}

\begin{defn}[Constant time]
A program is constant-time if for all instantiations of the existential parameters, the ISA-level leakage trace is secret-independent.
\begin{align*}
\mathit{IsConstantTime}_{\mathit{ISA}}(prog) \triangleq  & \forall \mathit{Driver}, \mathit{Witness}, n, \vec{\mathit{Pub}_n}, \vec{\mathit{Sec^n_1}}, \vec{\mathit{Sec^n_2}}. \\
&O_{\mathit{Cycle{-}ISA}\{\mathit{Driver}, \mathit{Witness}\}}^n(\vec{\mathit{Pub}_n}, \vec{\mathit{Sec}_n^1}) =
O_{\mathit{Cycle{-}ISA}\{\mathit{Driver}, \mathit{Witness}\}}^n(\vec{\mathit{Pub}_n}, \vec{\mathit{Sec}_n^2}) 
\end{align*}
\end{defn}

\begin{theorem}[End-to-end noninterference]\label{thm:e2e}
Let $\mu$ be an implementation satisfying \Cref{defi:correct}. For any program with
$\mathit{analyze}(\mathit{prog})\Downarrow\mathit{Safe}$,
$$\forall n, \vec{\mathit{Sec}_n^1}, \vec{\mathit{Sec}_n^2}.\;
  O_{\mu}^n(\vec{\mathit{Pub}_n},\vec{\mathit{Sec}_n^1}) = O_{\mu}^n(\vec{\mathit{Pub}_n},\vec{\mathit{Sec}_n^2}).$$
\end{theorem}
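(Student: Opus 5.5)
The plan is to chain the two ingredients directly. By \Cref{defi:correct}, $\mu$ comes with fixed (static) machines $\mathit{LT}$, $\mathit{Driver}$, $\mathit{Witness}$ such that for all $n$, $\vec{\mathit{Pub}_n}$, $\vec{\mathit{Sec}_n}$,
$$O_{\mu}^n(\vec{\mathit{Pub}_n},\vec{\mathit{Sec}_n}) = \mathit{LT}^n\bigl(\vec{\mathit{Pub}_n}, O_{\mathit{Cycle{-}ISA}\{\mathit{Driver},\mathit{Witness}\}}^n(\vec{\mathit{Pub}_n},\vec{\mathit{Sec}_n})\bigr).$$
We destructure this existential once, before quantifying over the two secret traces. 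This ordering is essential: the same $\mathit{LT}$, $\mathit{Driver}$ and $\mathit{Witness}$ must serve both runs. Because \Cref{defi:correct} places the existential outside the $\forall$, this is immediate.

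Next, from $\mathit{analyze}(\mathit{prog})\Downarrow\mathit{Safe}$, the soundness of the static analysis gives $\mathit{IsConstantTime}_{\mathit{ISA}}(\mathit{prog})$. We specialize this universally quantified statement to the particular $\mathit{Driver}$ and $\mathit{Witness}$ obtained above, and to the given $n$, $\vec{\mathit{Pub}_n}$, $\vec{\mathit{Sec}_n^1}$ and $\vec{\mathit{Sec}_n^2}$. This yields equality of the two specification leakage traces $O_{\mathit{Cycle{-}ISA}\{\mathit{Driver},\mathit{Witness}\}}^n$ under $\vec{\mathit{Sec}_n^1}$ and $\vec{\mathit{Sec}_n^2}$. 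Rewriting both sides of the goal with the equation from \Cref{defi:correct} reduces it to
$$\mathit{LT}^n(\vec{\mathit{Pub}_n}, L_1) = \mathit{LT}^n(\vec{\mathit{Pub}_n}, L_2)$$
with $L_1 = L_2$, which holds by congruence. The key structural point is that $\mathit{LT}$'s inputs are only $\mathit{PubIn}\times\mathit{LeakEvent}$, so it cannot see secrets.

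The main obstacle I expect is bookkeeping in how the program and public initial state enter the statement. Both $\mu$ and $\mathit{Cycle{-}ISA}$ must be understood as loaded with $\mathit{prog}$; in \Cref{thm:e2e} it is implicit in $\vec{\mathit{Pub}_n}$ or in the initial state. We must check that the $\mathit{Cycle{-}ISA}$ instance appearing in \Cref{defi:correct} for this $\mu$ is exactly the one about which $\mathit{IsConstantTime}_{\mathit{ISA}}(\mathit{prog})$ speaks. In particular, the initial architectural state must be the same function of the public data, with secrets arriving only through $\vec{\mathit{Sec}_n}$.

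If the correctness theorem is instead stated per program, or with initial data memory treated as secret input, I would first prove a small lemma that instantiating $\mu$ with $\mathit{prog}$ corresponds to instantiating $\mathit{Cycle{-}ISA}$ with $\mathit{prog}$. With that in place, the rest is the two rewrites above.
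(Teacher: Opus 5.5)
Your proposal is correct and follows the paper's proof sketch. You obtain $\mathit{IsConstantTime}_{\mathit{ISA}}(\mathit{prog})$ from soundness of the analysis, specialize it to the outer-quantified $\mathit{Driver}$ and $\mathit{Witness}$ of \Cref{defi:correct}, rewrite $O_\mu$ on both sides, and close by congruence of $\mathit{LT}^n$. The paper's sketch also notes that the result transfers along the trace-equivalence proofs down to the \quartz{}/RTL layer; that step is not needed for the statement as phrased for $\mu$.
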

\begin{proof}[Proof sketch]
Soundness of the static analysis gives $\mathit{IsConstantTime}_{\mathit{ISA}}(\mathit{prog})$, so the ISA leakage trace is secret-independent;
since the leakage transformer reads only public data and that trace, its output $\mathit{LT}^n(\vec{\mathit{Pub}_n}, O_{\mathit{Cycle{-}ISA}})$ is also secret-independent; and \Cref{defi:correct} equates this output with $O_\mu$. 
Noninterference is a trace property preserved by trace equivalence, so the result is preserved under successive trace-equivalence proofs down to RTL (\Cref{sec:integration:rtl}).
\end{proof}

\paragraph{Static analysis.}
The static analysis symbolically executes the program over states that tag each register, CSR, and byte in memory as $\mathtt{Public\,n}$ or $\mathtt{Secret}$ (imprecise but sufficient in our case study, where the only initial public data is instruction memory), ensuring all branch conditions, arguments to variable-latency instructions, and addresses of memory accesses are independent of \texttt{Secret} values.
Our analysis is not designed to handle infinite loops (it terminates soundly on the spin-loop pattern \texttt{Beq x0 x0 0}) or interrupt/exception handlers.
\Cref{fig:analysis} shows the state representation and the representative \texttt{Beq} rules; a secret operand is rejected (\textsc{Beq-Secret}), a public branch advances the symbolic PC (\textsc{Beq-Taken}), and a misaligned branch target is rejected (\textsc{Beq-Exn}).
The remaining rules are analogous.

\begin{figure}
{\footnotesize
\begin{mathpar}
\texttt{SymbVal } \tau ::=  \texttt{Public (n: } \tau \texttt{) | Secret} \and
\texttt{RegFile = Reg} \rightarrow \texttt{SymbVal Word} \and 
\texttt{Memory = Addr} \rightarrow \texttt{SymbVal Byte} \\
\texttt{st} \in \mathrm{SymbState} = \{ \texttt{Pc : Addr,  Rf : RegFile,  Mem : Memory, Csrs : CsrFile} \} \\
\texttt{res} \in \mathrm{SymbResult} ::= \texttt{Safe | Unsafe | Running(st)} \\
\and
\inferrule*[right=Beq-Spin]
  {\mathtt{rs1 = 0} \\ \mathtt{rs2 = 0} \\ \mathtt{offset = 0} \\ \mathtt{interruptsDisabled(st)}} 
  {\langle \texttt{Beq rs1 rs2 offset, st} \rangle \rightarrow \mathtt{Safe}} 
\and
\inferrule*[right=Beq-Taken] 
 { \neg \texttt{Spin} \\   \texttt{st.Rf(rs1) = Public v1} \\  \texttt{st.Rf(rs2) = Public v2} \\ \texttt{v1 = v2} \\ \mathtt{aligned(st.Pc + offset)}}
 {\langle \texttt{Beq rs1 rs2 offset, st} \rangle \rightarrow \mathtt{Running (st[Pc \leftarrow st.Pc + offset])}} 
\and 
\inferrule*[right=Beq-Exn] 
 {\neg \texttt{Spin}  \\ \texttt{st.Rf(rs1) = Public v1} \\  \texttt{st.Rf(rs2) = Public v2} \\ \texttt{v1 = v2} \\ \neg \mathtt{aligned(st.Pc + offset)}}
 {\langle \texttt{Beq rs1 rs2 offset, st} \rangle \rightarrow \mathtt{Unsafe}}
\and 
\inferrule*[right=Beq-Not-Taken] 
 { \neg \texttt{Spin}  \\  \texttt{st.Rf(rs1) = Public v1} \\  \texttt{st.Rf(rs2) = Public v2} \\ \texttt{v1 != v2} }
 {\langle \texttt{Beq rs1 rs2 offset, st} \rangle \rightarrow \mathtt{Running (st[Pc \leftarrow st.Pc + 4])}} 
\and
\inferrule*[right=Beq-Secret] 
 {\neg \texttt{Spin} \\ \texttt{st.Rf(rs1) = Secret} \vee \texttt{st.Rf(rs2) = Secret}} 
 {\langle \texttt{Beq rs1 rs2 offset, st} \rangle \rightarrow \mathtt{Unsafe}} 
\end{mathpar}
}
\caption{Symbolic state and representative \texttt{Beq} rules of the constant-time analysis.}
\label{fig:analysis}
\end{figure}

\paragraph{Salsa20 example}
We compile an existing constant-time C implementation of Salsa20 to RISC-V with a standard toolchain, treating data memory (key, message, nonce) as secret and the program as public, and run the analysis in under a second. 
\Cref{thm:e2e} yields an end-to-end noninterference proof for the binary on our processor.
This proof rules out bugs in intermediate layers, including any bugs in our encoding of the RISC-V spec, and also does not require trusting the RISC-V compiler.

\subsection{Lowering to RTL via \quartz{}}\label{sec:integration:rtl}
Our top-level proofs use a shallow embedding in a synthesizable fragment of Gallina, describing a cycle-accurate design as a first-order functional program using operations commonly available in hardware-description languages.
To validate this design choice, we also give this fragment a deeply embedded syntax we call \quartz{}, and prove the design thus expressed equivalent to the Gallina-native version.
In \quartz{}, functions are written in terms of combinational expressions and let-expressions, methods are functions that take state and input as arguments and return state and output, modules are collections of methods operating on the same type.
This equivalence proof is straightforward, only deviating from a partial evaluation of the \quartz{} interpreter to reconcile encoding details such as use of native Gallina structs vs. deeply embedded structs interpreted as tuples (as Gallina does not support polymorphism over lists of struct fields).

The \quartz{} implementation of the processor is pretty-printed as SystemVerilog, generating reasonably readable code where variable names are preserved and e.g. source-level structs are represented as (packed, synthesizable) SystemVerilog structs.
While we consider it desirable to verify the translation, we are not aware of an adequately featureful formalization of SystemVerilog semantics against which it could be performed.
Thus the 200-line syntax-mapping function is currently part of our TCB.
We confirmed that the extracted top-level cycle function (and a trivial always-block wrapper) can be successfully synthesized for the ECP5 FPGA using open-source tools Yosys and NextPnR~\cite{YosysNextPNR},
but we have not integrated it into a SoC design for post-synthesis testing.

To prove another RTL description of the same cycle-update function in, e.g., Verilog is equivalent to our implementation, one could use automated tools for equivalence checking at RTL.
The proof obligation amounts to proving equivalence at a single cycle, which is less likely to run into challenges related to state-space explosion.

\section{Discussion}

This section discusses what is in the trusted computing base (\Cref{sec:eval:limits}); what changes to the spec and proof were needed when adding branch predictors, exceptions, and interrupts (\Cref{sec:eval:mods}); and how the methodology applies to verifying the processor independently of the memory subsystem (\Cref{sec:eval:mem}).

\subsection{Trusted Computing Base}\label{sec:eval:limits}
The TCB contains the $Cycle{-}ISA$ contract specifying functional and leakage semantics of a fragment of RISC-V ($\sim$700 LoC), the observation/adversary functions defining the I/O interface, the abstract model of the memory hierarchy, the model of the external world, the Rocq kernel, and the \quartz{} pretty-printer from circuits to Verilog.
The RTL processor, leakage transformer, driver, and every submodule implementation are not in the TCB; they are eliminated via machine-checked proof.
When connecting with a software proof, such as via a verified software static analysis, the $Cycle{-}ISA$ is eliminated from the TCB.
Intermediate specification layers are not in the TCB.

\subsection{Design Modifications}\label{sec:eval:mods}
Our initial design and proof had no branch predictors nor support for exceptions, interrupts, and CSRs.
We discuss how the proofs changed as these features were added.

\paragraph{Adding a branch predictor.}
Adding a branch history table and branch target buffer took < 1 day with no changes to the spec.
Predictors are specified as abstract state machines whose state is public (defined as a trace of method calls).
As our processor enforces that branch predictors are only updated with public information, the branch histories in the implementation and shadow machine are equivalent.
As there is no functional-correctness obligation, instantiating the branch-predictor specs with concrete implementations is trivial.

\paragraph{Adding exceptions and interrupts}
This extension required modifying the spec, adding CSRs for handling exceptions, exception semantics for instructions, and an interrupt-handling step.
Additionally, it required extending the driver commands from simply outputting the number of steps to including whether an interrupt is taken at the interrupt-handling step.
Notably, adding interrupts moved the visibility point later: an instruction in the execute-to-writeback queue may now be followed by an interrupt that flushes the pipeline, so the driver and invariants were updated to treat an instruction as visible only once no older instruction can cause a flush.

\paragraph{A multicycle processor}
To validate that the specification supports different processor architectures, we also verified a three-stage multicycle design, which required a significantly different top-level proof but no meaningful changes to the spec.

\subsection{Independent Verification of the Processor}\label{sec:eval:mem}
Fjfj~\cite{Fjfj} verifies a processor independently of the memory subsystem; the core question is specifying allowed load/store behaviour, since a processor may speculatively issue loads absent from the sequential ISA and reorder loads and stores. 
These speculative loads are permitted when the speculative addresses and the decision to issue are secret-independent.
We give an example proof of a processor specified independently of memory, allowed to have speculative loads:
the processor specification has load buffers and is existentially parameterized with a driver that can instruct it to take a normal step, issue a load request, or receive a response. 
As the driver is a function of only public information, implementation observations remain a function of public information---also demonstrating how driver commands generalize to specs with different next-state transitions.
This specification assumes no functional correctness of memory, specifying allowed behaviours of the processor in the face of arbitrary memory behaviour.
As in Fjfj~\cite{Fjfj}, one should be able to prove that composing with a well-behaved memory yields the top-level ISA spec, though the challenge is that the two specifications' leakage traces differ. 

We found that proving a processor while assuming an abstract model of memory allowed for a more satisfying specification without many additional assumptions.
An implementation processor would still be free to issue loads and reorder stores speculatively, as these are internal to the system and invisible at the observation level.
A specification of the processor independently of assumptions about memory has to specify what memory reorderings are allowed.
This processor spec may be important for multicore semantics and weak memory models, but it is unnecessarily complicated for single-core designs.
Nevertheless, we provide an example of how one could implement a specification machine with speculative loads and load buffering, while encapsulating nondeterminism and specifying leakage semantics.

\section{Related Work}

\paragraph{Verifying leakage contracts with model checkers.}
LeaVe~\cite{Leave}, UPEC and UPEC-DIT~\cite{UPEC,upec-dit}, and Contract Shadow Logic~\cite{contractShadowLogic} use model checkers to verify speculative constant-time contracts on Verilog.
However, they assume functional correctness (so the RTL stays in the TCB),
they do not natively model the nondeterminism of interrupts and I/O (typically assuming none occur), and their SMT-based property languages, effective for bounded relational checks, are less suited for high-level software specifications and compositional statements combining functional correctness and nondeterminism.
In contrast, \granite{} verifies both functional correctness and security of RTL designs against high-level hardware-software specifications encompassing nondeterminism, obtaining an easy-to-audit proof that composes with software proofs for an end-to-end guarantee.

\paragraph{Functional correctness via refinement and rule-based HDLs.}
Kami~\cite{kami} and Fjfj~\cite{Fjfj} prove functional correctness of processors written in rule-based HDLs.
However, their semantics abstracts away cycle-accurate timing, and they do not address leakage.
Lightbulb~\cite{lightbulb} extends Kami with integration verification from software to hardware but likewise does not target leakage.
MTIsolation~\cite{mtisolation} uses \Koika{}~\cite{koika-pldi20}, a rule-based HDL with cycle-accurate semantics, to prove enclave timing isolation via a hybrid Rocq--SMT approach and, like \granite{}, uses existential parameters for unspecified behaviour.
\granite{} extends these ideas with nondeterminism, processor functional correctness, and the dynamic declassification of HW/SW leakage contracts.

\paragraph{Information-preserving and secure refinement.}
Similar to our security goal, Knox~\cite{Knox} introduces information-preserving refinement (IPR) for hardware security modules but targets transaction-level calls rather than ISA leakage contracts; Parfait~\cite{parfait} proves IPR by collapsing hardware-software boundaries but for a specific program on a minimal core (PicoRV32).
Pantomime~\cite{pantomime} also derives the implementation leakage trace from a specification trace via a simulator, but its specification trace comes from a circuit-level description of complexity comparable to the implementation, whereas \granite{} relates RTL to an instruction-set-level leakage trace.
Correnson et al.~\cite{hoffmann2026deductive} give a deductive system for a property analogous to contractual noninterference but assume deterministic semantics and do not verify RTL.

\paragraph{Abstract models, leakage models, and mitigations.}
Pensieve~\cite{pensieve}, Guarnieri et al.~\cite{guarnieri2021hardware}, and CheckMate~\cite{checkmate} target abstract processor models.
These models are useful for early-stage designs and defenses but underspecify the cycle-level implementation details that can lead to leakages (Pensieve shares our decomposition of functional and leakage submodule specs but does not prove implementations satisfy them and is bounded to nine steps).
Motivated by transient-execution attacks, richer leakage models~\cite{ct-foundations,axiomatic-hw-sw-isca-22} with fuzzing-based validation~\cite{Revizor,hideAndSeek}, processors designed for secure speculation such as ProSpeCT~\cite{prospect}, defenses such as GhostMinion~\cite{ghostMinion}, and compilers such as Serberus~\cite{serberus} all aim to close the leakage gap but remain unverified at the RTL.

\paragraph{Noninterference and information flow.}
Various works on verifying the seL4 kernel~\cite{sel4}, mCertiKOS hypervisor~\cite{certikos-sec-pldi16}, and Komodo security monitor~\cite{komodo-sosp17} have formalized noninterference and information-flow security at the ISA level, but do not tackle RTL verification.
HyperFlow~\cite{hyperflow-ccs18}, SecVerilog~\cite{secverilog-asplos15}, and SpecVerilog~\cite{specverilog} enforce security with information-flow type systems: these approaches are automated but face completeness limitations common in static-analysis approaches.
For soundness, type systems are conservative and may reject secure designs because the type systems cannot reason about the functional logic that prevents leakage.
For expressivity, designers use declassification policies to express allowed leakages.
However, when defined at a low level associated with a particular implementation's code, understanding the security guarantee requires detailed auditing and deep understanding of the system.
We propose a spec that is auditable without understanding low-level details of the implementation.

\paragraph{Security verification under nondeterminism.}
O'Neill et al.~\cite{interactive-information-flow-chong} use \emph{refiners} to determinize nondeterministic semantics for an information-flow noninterference definition---analogous to our existential parameters, but their focus is on information-flow static analyses rather than specifications and proofs, and they do not address the challenge of declassification.
Conoly et al.~\cite{bedrock-ct} use \emph{predictors} to prove a compiler preserves constant time under allocation and I/O nondeterminism, also via leakage transformers, but do not tackle anything analogous to the underspecification of the mapping of instruction boundaries.

\section{Conclusion}
\granite{} shows that a particular combination of specification and proof techniques allows functional correctness and nonleakage guarantees to be achieved in a modular fashion, for the first time composing timing-security proof of digital-logic modules all the way up to the instruction-set-level specification (and then with software proofs).
More broadly, \granite{} suggests that the decades of refinement-based functional-correctness technology built for proof assistants can be repurposed, largely intact, for microarchitectural timing security---provided nondeterminism is determinized rather than abstracted away.

\bibliographystyle{ACM-Reference-Format}
\bibliography{refs}

\end{document}